\renewcommand{\subsection}{\@startsection
{subsection}{2}{0mm}{\baselineskip}{-0.25cm}
{\normalfont\normalsize\em}}
\newtheorem{theorem}{Theorem}[section]
\newtheorem{proposition}[theorem]{Proposition}
\newtheorem{corollary}[theorem]{Corollary}
\newtheorem{lemma}[theorem]{Lemma}
\theoremstyle{definition}
\newtheorem{definition}[theorem]{Definition}
\newtheorem{example}[theorem]{Example}
\theoremstyle{remark}
\begin{document}

\title{On the order bounds for one-point AG codes}

\author{Olav Geil}
\author{Carlos Munuera}
\author{Diego Ruano}
\author{Fernando Torres}

  \begin{abstract}
The order bound for the minimum distance of algebraic geometry codes was originally defined for the duals of one-point codes and later generalized for arbitrary algebraic geometry codes. Another bound  of order type for the minimum distance of general linear codes, and  for codes from order domains in particular, was given in \cite{AG}. Here we investigate in detail the application of that bound to one-point algebraic geometry codes, obtaining a bound $d^*$ for the minimum distance of these codes. We establish a connection  between $d^*$ and the order bound and its generalizations. We also study the improved code constructions based on $d^*$. Finally we extend  $d^*$ to all generalized Hamming weights. 
  \end{abstract}

\thanks{Olav Geil and Diego Ruano are with the Department of Mathematical Sciences, Aalborg University, Fr. Bajersvej 7G, DK-9220 Aalborg, Denmark. 
Carlos Munuera is with the Department of Applied Mathematics, University of Valladolid, Avda Salamanca SN, 47014 Valladolid, Castilla, Spain.
Fernando Torres is with the Institute of Mathematics, Statistics and Computer Science, P.O. Box 6065, University of
Campinas, 13083-970, Campinas, SP Brazil}

\thanks{This work was supported in part 
by Danish National Science Research Council Grant FNV-21040368, Junta de CyL under grant VA065A07 and by Spanish Ministry for Science and Technology under grants MTM2007-66842-C02-01 and  MTM 2007-64704.}

\thanks{2000 {\em Mathematics Subject Classification:} Primary 94B27; Secondary 14G50, 14H55.}

\thanks{{\em Key words and phrases:} Linear codes, one-point algebraic geometry codes, minimum distance, Weierstrass semigroup, order bound.}

\maketitle

%%%%%%%%%%%%%%%%%%%%%%%%%%%%%%%%%%%%%%%%%%%%%%%%%%%%%%%%%%%%%%%%%%%%%%%%%%%%%%%%%%%%%%%%%%%%%%%%%%%%%%%%%%%%%%%%%%%%%%%%%

\section{Introduction}

Algebraic geometry codes, or AG codes, over the finite field ${\mathbb F}_q$ with $q$ elements are constructed from a (projective, non-singular, geometrically irreducible) algebraic curve ${\mathcal X} | {\mathbb F}_q$ and two rational divisors with disjoint support, $D=P_1+\dots+P_n$ and $G$ . The code $C(D,G)$ is defined as the image of the Riemann-Roch space ${\mathcal L}(G)$ by the evaluation at $D$  map $ev_D:{\mathcal L}(G)\rightarrow{\mathbb F}_q^n$, $ev_D(f)=(f(P_1),\dots,f(P_n))$, see Section 3 or \cite{Du,HLP,Sti}. The divisor $G$ is often taken as a multiple of a single point, $G=mQ$, with $Q\not\in\mbox{supp}(D)$. In this case $C(D,G)=C(D,mQ)$ is called {\em one-point} code.

Given a code $C(D,mQ)$ the first task is to compute its parameters: length, dimension and minimum distance. The length is obviously $n=\deg(D)$. In order to compute the dimension an important role is played by the Weierstrass semigroup at $Q$,
$$
H=H(Q):=\{ -v_Q(f) : f\in {\mathcal L}(\infty Q)\setminus \{ 0\}  \}=\{ h_1=0<h_2<\dots \}
$$
where $v_Q$ is the valuation at $Q$ and ${\mathcal L}(\infty Q)=\cup_{r=0,1,\dots} {\mathcal L}(rQ)$. In fact, if $h_i<n$ then the dimension of $C(D,h_iQ)$ is $i$. For $m\ge n$ this is no longer true in general, as the evaluation map $ev_D:{\mathcal L}(mQ)\rightarrow{\mathbb F}_q^n$ might have a non-trivial kernel, ${\mathcal L}(mQ-D)$. Thus we consider the set
$$
H^*=H^*(D,Q):=\{ m\in{\mathbb N}_0 : C(D,mQ)\neq C(D,(m-1)Q) \}.
$$
Knowing $H^*$ is equivalent to knowing the dimension of all codes $C(D,mQ)$. It is clear that $H^*$ consists of $n$ elements, that $H^*\subset H$ and that for $m<n$, $m\in H^*$ if and only if $m\in H$. 

Regarding the minimum distance $d=d(C(D,mQ))$ the simplest estimate is given by the Goppa bound, $d\ge n-m$. The Goppa bound does not give the true minimum distance in many cases. For example, it does not give any information when $m\ge n$. This problem can be solved by using the improved Goppa bound, $d\ge n-m+\gamma_{a+1}$,  where $a=\ell(mQ-D)$ is the {\em abundance} of $C(D,mQ)$. The drawback of this improved bound is that it is based on the gonality sequence $(\gamma_i)$ of the curve ${\mathcal X}$, see \cite{M}, which is difficult to compute. 

Besides   uniform bounds, some of the most interesting known bounds for $d$ are of order type. These bounds are based on obtaining different estimates for different subsets of codewords. They are successful if for each subset we can find estimates better than a uniform bound for all codewords, see \cite{Du2}. The original order bound $d_{ORD}$ (also called {\em Feng-Rao} bound) was introduced by Feng and Rao in \cite{FR} and by H\o holdt, van Lint and Pellikaan in \cite{HLP}.  It usually gives very good results, but it has the disadvantage that it can only be applied to the duals of one-point codes, which are not one-point codes in general. 
 A nice generalization of this bound for arbitrary AG codes was given by Beelen \cite{Beelen} and later improved by Duursma, Kirov and Park in a sequence of articles \cite{Du2,Du3,Du4}. 

Another  bound of order type for {\em general} linear codes was given in \cite{AG}.  This bound was applied to order domain codes and to one-point codes in particular. In the present work, we investigate in detail the case of one-point codes, obtaining a bound $d^*$. This bound was already present in \cite{AG} (Proposition 37)   but here we state it explicitly, by showing how to compute $d^*$  from the set $H^*$ defined above. Besides we investigate the connection to the order bound. 
 We show that $d^*$ is a special case of the Beelen and Duursma-Kirov-Park generalized bouds.  Since it can happen that the generalized order bounds give different results than the original one,  we also investigate the connection of $d^*$ to the original order bound $d_{ORD}$. We show that when both can be applied -namely when the dual of a one-point code is isometric to a one-point code- then both coincide.  Furthermore we investigate how to construct improved codes from $d^*$ and how to extend $d^*$ to all generalized Hamming weights. These problems have never been treated in the aforementioned works of Beelen and Duursma-Kirov-Park. 
  Thus the main purpose of this article is not to present a new or better bound, but (i) to make the conection between the Andersen-Geil bound and the order bounds for AG one-point codes,  (ii) to emphasize the possibility of manage the order bound entirely in the language of one-point evaluation codes and Weierstrass semigroups; (iii) to study how to construct improved codes; and (iv) to extend $d^*$ to all generalized Hamming weights.  

The paper is structured in 5 sections: In Section \ref{se:bo} we briefly recall the bound for the minimum distance of linear codes from \cite{AG}  as well as the main facts and definitions  we need.  We introduce the bound $d^*$ for one-point codes in Section \ref{se:boop}, 
 where we  also show   the connection with the generalized order bounds of Beelen and Duursma-Kirov-Park.   
We also deal with improved codes, whose construction  becomes now very easy. 
Some worked examples where we show how to compute $d^*$ are included.
In Section \ref{se:reb} we compare the bound $d^*$ to the strict order bound (that is the original order bound $d_{ORD}$ with respect to the evaluation map $ev_D$), showing that when both can be applied then they give the same result. Furthermore, we continue our study of improved codes. Finally in Section \ref{se:ghw} we extend $d^*$ to all generalized Hamming weights.

\section{The bound from \cite{AG} for the minimum distance of linear codes} \label{se:bo}

For the convenience of the reader, we begin with a brief explanation of some results from \cite{AG}. Let ${\mathcal B} =\{ {\mathbf b}_1, \dots, {\mathbf b}_n\}$ be a basis of ${\mathbb F}_q^n$. We consider the codes $C_0=(0)$, and for $i=1,\dots,n$, 
$$
C_i=\langle {\mathbf b}_1,\dots,{\mathbf b}_i \rangle .
$$ 
Associated to these codes we consider the (valuation-like) map $\nu:{\mathbb F}_q^n \rightarrow \{0,\dots,n\}$
defined by $\nu({\mathbf v})=\min \{ i : {\mathbf v} \in C_i\}$. 

  \begin{lemma}\label{properties} 
Let ${\mathbf v}_1,\dots,{\mathbf v}_m \in {\mathbb F}_q^n$. Then 
\begin{enumerate}
\item[\rm(a)] $\nu({\mathbf v}_1+\dots+{\mathbf v}_m) \leq \max \{\nu({\mathbf v}_1), \dots, \nu({\mathbf v}_m)\}$. If there exists $j$ such that $\nu({\mathbf v}_i) <\nu({\mathbf v}_j)$ for all $i \neq j$, then equality holds. 
\item[\rm(b)] $\dim (\langle {\mathbf v}_1,\dots,{\mathbf v}_m \rangle ) \geq \# \{\nu({\mathbf v}_1),\dots,\nu({\mathbf v}_m)\}$. 
Conversely, if $D\subseteq {\mathbb F}_q^n$ is a linear subspace of dimension $m$, then  there exists a basis $\{{\mathbf v}_1,\dots,{\mathbf v}_m\}$ of $D$ such that $\# \{\nu({\mathbf v}_1),\dots,\nu({\mathbf v}_m)\}=m$.
\end{enumerate} 
  \end{lemma}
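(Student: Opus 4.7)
\emph{Proof plan.} The plan is to exploit the flag $C_0\subset C_1\subset\cdots\subset C_n$, in which each quotient $C_i/C_{i-1}$ is one-dimensional, so that $\nu$ behaves like a non-Archimedean valuation with respect to this filtration.

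For (a), my first step is to observe that if $k=\max_i \nu(\mathbf{v}_i)$, then every $\mathbf{v}_i$ lies in $C_{\nu(\mathbf{v}_i)}\subseteq C_k$ by nesting, so $\mathbf{v}_1+\cdots+\mathbf{v}_m\in C_k$, giving $\nu(\mathbf{v}_1+\cdots+\mathbf{v}_m)\leq k$. For the equality case, assume $\nu(\mathbf{v}_j)=k$ is strictly larger than all other $\nu(\mathbf{v}_i)$. Then $\sum_{i\neq j}\mathbf{v}_i\in C_{k-1}$ by the nesting argument; if in addition $\sum_i \mathbf{v}_i\in C_{k-1}$ held, subtraction would place $\mathbf{v}_j\in C_{k-1}$, contradicting $\nu(\mathbf{v}_j)=k$. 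Hence $\nu(\sum_i\mathbf{v}_i)=k$.

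For the first assertion of (b), I would pass to a maximal subfamily $\mathbf{v}_{i_1},\ldots,\mathbf{v}_{i_s}$ with pairwise distinct $\nu$-values, where $s=\#\{\nu(\mathbf{v}_1),\ldots,\nu(\mathbf{v}_m)\}$, and prove linear independence of these $s$ vectors. Any nontrivial relation $\sum_{j}\alpha_j\mathbf{v}_{i_j}=\mathbf{0}$ with $\alpha_j\in\mathbb{F}_q^*$ would involve summands whose $\nu$-values remain pairwise distinct (because $\nu(\alpha\mathbf{v})=\nu(\mathbf{v})$ for $\alpha\neq 0$), so the equality clause of (a) would force the left-hand side to have $\nu$-value equal to the unique maximum, which is at least $1$; this contradicts $\nu(\mathbf{0})=0$. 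Consequently $\dim\langle\mathbf{v}_1,\ldots,\mathbf{v}_m\rangle\geq s$.

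For the converse of (b), the key idea is to exploit the induced filtration $\{\mathbf{0}\}=D\cap C_0\subseteq D\cap C_1\subseteq\cdots\subseteq D\cap C_n=D$. Since $C_i/C_{i-1}$ is one-dimensional, each step of this filtration raises the dimension by at most $1$, and the total increase over the chain equals $m$, so there are exactly $m$ indices $i_1<\cdots<i_m$ at which a jump occurs. Selecting $\mathbf{w}_k\in(D\cap C_{i_k})\setminus(D\cap C_{i_k-1})$ gives $\nu(\mathbf{w}_k)=i_k$, and by the first half of (b) the vectors $\mathbf{w}_1,\ldots,\mathbf{w}_m$ are linearly independent, so they form a basis of $D$ with $m$ distinct $\nu$-values. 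The main delicate point I anticipate is the equality clause of (a), since the rest of the argument reduces to the filtration bookkeeping once that valuation-like property is in hand.
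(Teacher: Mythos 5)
Your proof is correct and follows essentially the same route as the paper: (a) via the nesting of the $C_i$, the first half of (b) via the unique-maximum clause of (a) applied to a putative dependence relation, and the converse via the induced filtration $D\cap C_i$ and counting dimension jumps. Your justification that each jump is at most one (via the one-dimensional quotients $C_i/C_{i-1}$) is in fact slightly cleaner than the paper's stated decomposition, but the argument is the same in substance.
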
 
  \begin{proof}
(a) is clear. (b) Assume $\#\{\nu({\mathbf v}_1),\dots,\nu({\mathbf v}_m)\} = t$ and $\nu({\mathbf v}_1)<\dots <\nu({\mathbf v}_t)$. If $\lambda_1{\mathbf v}_1+\dots+\lambda_t{\mathbf v}_t=0$ then $0=\nu({\mathbf 0})=\nu(\lambda_1{\mathbf v}_1+\dots+\lambda_t{\mathbf v}_t)=\max \{ \nu({\mathbf v}_i) : \lambda_i \neq 0\}$. 
By (a) this implies $\lambda_1= \dots =\lambda_t=0$. Conversely write $D_i=D\cap C_i$. For all $i=1,\dots,n$, it holds that  $D_i=D_{i-1} \oplus (D\cap \langle {\mathbf b}_i \rangle )$, hence $\dim (D_{i-1})\le \dim (D_i)\le \dim (D_{i-1})+1$ and the last inequality is an equality precisely $m$ times. If $D_i\neq D_{i-1}$, take a vector ${\mathbf v}_i\in D_i\setminus D_{i-1}$. Then $\# \{\nu({\mathbf v}_1),\dots,\nu({\mathbf v}_m)\}=m$ and according to (b), $\{{\mathbf v}_1,\dots, {\mathbf v}_m\}$ is a basis of $D$. 
  \end{proof}

For ${\mathbf c} \in {\mathbb F}_q^n$, ${\mathbf c} \neq 0$, we consider the space $V({\mathbf c})= \{ {\mathbf v} \in {\mathbb F}_q^n : \mbox{supp} ({\mathbf v}) \subseteq\mbox{supp}({\mathbf c})\} = \{ {\mathbf v}*{\mathbf c} : {\mathbf v} \in {\mathbb F}_q^n\}$, where the component-wise product is defined as usual: ${\mathbf v}*{\mathbf c}=(v_1c_1,\dots,v_nc_n)$. Clearly $\dim (V({\mathbf c}))=\mathrm{wt}({\mathbf c})$, where $\mathrm{wt}({\mathbf  c})$ denotes the weight of ${\mathbf c}$. Now consider in $\{ 1,\dots,n\}^2$ the order $(r,s) < (i,j)$ if and only if $r \leq i$, $s \leq j$ and $(r,s) \neq (i,j)$. A pair $({\mathbf b}_i,{\mathbf b}_j)$ is called {\em well-behaving} if $\nu({\mathbf b}_r*{\mathbf b}_s)<\nu({\mathbf b}_i*{\mathbf b}_j)$ for all $(r,s)<(i,j)$. For $i=1,\dots, n$, define
$$
\Lambda_i= \{ {\mathbf b}_j \in {\mathcal B} : ({\mathbf b}_i,{\mathbf b}_j) \mbox{ is well-behaving}\}  .
$$ 

Since we can write ${\mathbf c}=\lambda_1{\mathbf b}_1+\dots+\lambda_{\nu({\mathbf c})}{\mathbf b}_{\nu({\mathbf c})}$ with $\lambda_{\nu({\mathbf c})} \neq 0$, then for ${\mathbf b}_j \in \Lambda_{\nu({\mathbf c})}$ we have 
$$
\nu({\mathbf c}*{\mathbf b}_j)=\nu(\sum_{i=1}^{\nu({\mathbf c})} \lambda_i{\mathbf b}_i*{\mathbf b}_j)=\nu({\mathbf b}_{\nu({\mathbf c})}*{\mathbf b}_j).
$$

  \begin{proposition}\label{agbound}
Let  ${\mathbf c} \in {\mathbb F}_q^n$. If ${\mathbf c} \neq {\mathbf 0}$ then $\mathrm{wt}({\mathbf c}) \geq \# \Lambda_{\nu({\mathbf c})}$.
  \end{proposition}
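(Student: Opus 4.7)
The plan is to exhibit $\#\Lambda_{\nu(\mathbf{c})}$ vectors inside $V(\mathbf{c})$ with pairwise distinct $\nu$-values, and then invoke Lemma \ref{properties}(b) together with the identity $\dim V(\mathbf{c}) = \mathrm{wt}(\mathbf{c})$.

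First I would fix $\mathbf{c} \neq \mathbf{0}$, write $i := \nu(\mathbf{c})$, and for each $\mathbf{b}_j \in \Lambda_i$ form the vector $\mathbf{c} * \mathbf{b}_j$. By the very definition of $V(\mathbf{c})$ (as the set of coordinate-wise products $\mathbf{v} * \mathbf{c}$) each such vector lies in $V(\mathbf{c})$. The displayed computation immediately preceding the proposition shows that $\nu(\mathbf{c} * \mathbf{b}_j) = \nu(\mathbf{b}_i * \mathbf{b}_j)$ whenever $\mathbf{b}_j \in \Lambda_i$, so the problem reduces to controlling the values $\nu(\mathbf{b}_i * \mathbf{b}_j)$ as $j$ ranges over the indices in $\Lambda_i$.

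The key step is then to verify that these $\nu$-values are pairwise distinct. Given two indices $j_1 < j_2$ with $\mathbf{b}_{j_1}, \mathbf{b}_{j_2} \in \Lambda_i$, one has $(i,j_1) < (i,j_2)$ in the partial order on $\{1,\dots,n\}^2$. Since the pair $(\mathbf{b}_i, \mathbf{b}_{j_2})$ is well-behaving, the defining property of well-behaving pairs gives $\nu(\mathbf{b}_i * \mathbf{b}_{j_1}) < \nu(\mathbf{b}_i * \mathbf{b}_{j_2})$. Hence the $\#\Lambda_i$ values $\{\nu(\mathbf{c} * \mathbf{b}_j) : \mathbf{b}_j \in \Lambda_i\}$ are all distinct.

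Finally I would apply Lemma \ref{properties}(b) to the family $\{\mathbf{c} * \mathbf{b}_j : \mathbf{b}_j \in \Lambda_i\} \subseteq V(\mathbf{c})$ to obtain
\[
\mathrm{wt}(\mathbf{c}) \;=\; \dim V(\mathbf{c}) \;\geq\; \dim \langle \mathbf{c} * \mathbf{b}_j : \mathbf{b}_j \in \Lambda_i \rangle \;\geq\; \#\{\nu(\mathbf{c} * \mathbf{b}_j) : \mathbf{b}_j \in \Lambda_i\} \;=\; \#\Lambda_{\nu(\mathbf{c})},
\]
which is the desired inequality. No step looks like a serious obstacle: all the work has effectively been front-loaded into the definition of ``well-behaving'' and the displayed identity before the proposition; the proof is really just assembling these ingredients.
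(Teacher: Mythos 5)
Your proof is correct and follows essentially the same route as the paper: the chain $\mathrm{wt}(\mathbf{c})=\dim V(\mathbf{c})\geq\dim\langle \mathbf{c}*\mathbf{b}_j\rangle\geq\#\{\nu(\mathbf{c}*\mathbf{b}_j)\}=\#\Lambda_{\nu(\mathbf{c})}$ via Lemma \ref{properties}(b) and the displayed identity. The only difference is that you explicitly justify the injectivity of $j\mapsto\nu(\mathbf{b}_{\nu(\mathbf{c})}*\mathbf{b}_j)$ on $\Lambda_{\nu(\mathbf{c})}$ using the well-behaving property, a step the paper leaves implicit; this is a welcome clarification rather than a deviation.
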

  \begin{proof}
We have $\mathrm{wt}({\mathbf c})=\dim (V({\mathbf c})) \geq \dim (\langle {\mathbf c}*{\mathbf b}_1,\dots,{\mathbf c}*{\mathbf b}_n \rangle) \geq \#\{\nu({\mathbf c}*{\mathbf b}_1),\dots, \nu({\mathbf c}*{\mathbf b}_n)\} \geq \#\{\nu({\mathbf c}*{\mathbf b}_j) : j \in \Lambda_{\nu({\mathbf c})}\}=\#\{\nu({\mathbf b}_{\nu({\mathbf c})}*{\mathbf b}_j) : j \in \Lambda_{\nu({\mathbf c})}\}= \#\Lambda _{\nu({\mathbf c})}$.
  \end{proof}

  \begin{theorem} 
For $i=1,\dots,n$, the true minimum distance of $C_i$, satisfies $d(C_i) \geq  \min \{ \#\Lambda_r : r \leq i\}$.  
  \end{theorem}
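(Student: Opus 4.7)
The plan is to combine the definition of $\nu$ with Proposition \ref{agbound} directly; the statement essentially packages Proposition \ref{agbound} over all nonzero codewords of $C_i$.

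First I would fix a nonzero codeword $\mathbf{c} \in C_i$ and observe that by the very definition of $\nu$ as $\nu(\mathbf{c})=\min\{j:\mathbf{c}\in C_j\}$, membership $\mathbf{c}\in C_i$ forces $\nu(\mathbf{c})\leq i$. Moreover $\nu(\mathbf{c})\geq 1$ because $\mathbf{c}\neq \mathbf{0}$ and $C_0=(0)$. Thus $\nu(\mathbf{c})\in\{1,\dots,i\}$, so $\#\Lambda_{\nu(\mathbf{c})}$ is one of the quantities appearing in the minimum on the right-hand side.

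Next I would invoke Proposition \ref{agbound} on $\mathbf{c}$ to obtain $\mathrm{wt}(\mathbf{c})\geq \#\Lambda_{\nu(\mathbf{c})}$. Chaining the two inequalities yields
\[
\mathrm{wt}(\mathbf{c})\;\geq\;\#\Lambda_{\nu(\mathbf{c})}\;\geq\;\min\{\,\#\Lambda_r : r\leq i\,\}.
\]
Since this holds for every nonzero $\mathbf{c}\in C_i$, taking the minimum over such codewords gives $d(C_i)\geq \min\{\#\Lambda_r : r\leq i\}$, which is the desired bound.

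There is no real obstacle here, as the work has already been done in Lemma \ref{properties} and Proposition \ref{agbound}; the only point to be careful about is checking that the indices $\nu(\mathbf{c})$ arising from nonzero codewords of $C_i$ actually lie in the range $\{1,\dots,i\}$, so that taking the minimum over $r\leq i$ is legitimate.
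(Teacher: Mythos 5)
Your proof is correct and is precisely the argument the paper intends: the theorem is stated without proof because it follows immediately from Proposition \ref{agbound} together with the observation that $\nu(\mathbf{c})\le i$ for any nonzero $\mathbf{c}\in C_i$. Your care in checking that $\nu(\mathbf{c})\in\{1,\dots,i\}$ is exactly the right (and only) point to verify.
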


This bound can be applied to an arbitrary linear code $C$, just by including it into an increasing chain of codes  $C_1 \subset\dots\subset C_{k-1} \subset C \subset C_{k+1}\subset\dots \subset C_n={\mathbb F}_q^n$. Such a chain  is quite natural for one-point codes.

\section{A bound for the minimum distance of one-point codes} \label{se:boop}

\subsection{The bound}

Let ${\mathcal X}$ be a (projective, non-singular, geometrically irreducible algebraic) curve of genus $g$ defined over the finite field ${\mathbb F}_q$. We construct one-point codes from $\mathcal X$ in the usual way. Let $Q,P_1,\dots,P_n$ be different rational points in ${\mathcal X}$. Let $v=-v_Q$, where $v_Q$ is the valuation at $Q$, and consider the spaces ${\mathcal L} (mQ)$ and the algebra ${\mathcal L}(\infty Q)=\cup_{r=0,1,\dots} {\mathcal L} (rQ)$. Let $D=P_1+\dots+P_n$ and $ev=ev_D: {\mathcal L}(\infty Q) \rightarrow {\mathbb F}_q^n$ be the evaluation map at $D$. The one-point codes $C(D,mQ)$ arising from ${\mathcal X}, D$ and $Q$ are defined as the images of the sets ${\mathcal L} (mQ)$ by $ev$, that is $C(D,mQ)=ev({\mathcal L} (mQ))$. Note that $C(D,(n+2g-1)Q)={\mathbb F}_q^n$, hence we can restrict ourselves to $0\le m\le n+2g-1$.

Let $C=C(D,mQ)$. We shall apply to $C$ the bound from Section \ref{se:bo} with respect to the sequence of codes $C_1 \subset C_2 \subset \dots \subset C_n$, obtained from the sequence $( C(D, mQ) )_{m=0,\dots,n+2g-1}$ by deleting the repeated codes. Thus the map $\nu$ can be written as  
$$
\nu({\mathbf v})=\min \{\dim (C(D,mQ)) : {\mathbf v} \in C(D,mQ)\}.
$$
From now on, unless explicitly said, we restrict ourselves to codes with length $n > 2g+2$.

  \begin{lemma} \label{le:3.1}
For $f \in {\mathcal L}(\infty Q)$ we have $\nu(ev(f)) \leq \dim (C(D,v(f)Q))$. If $C(D,v(f)Q) \neq C(D,(v(f)-1)Q)$ then  equality holds, $\nu(ev(f)) = \dim (C(D,v(f)Q))$. 
  \end{lemma}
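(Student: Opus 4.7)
The plan is to prove both parts directly from the definitions of $\nu$ and the dimension formula for evaluation codes.

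For the first inequality, I would note that since $v(f)=-v_Q(f)$ is the pole order of $f$ at $Q$, we have $f\in\mathcal{L}(v(f)Q)$, so $ev(f)\in C(D,v(f)Q)$. Because the chain $C_1\subset\cdots\subset C_n$ is obtained by keeping only the distinct codes of $(C(D,mQ))_m$, each inclusion is strict by construction, so $\dim C_i=i$ and $C(D,v(f)Q)$ equals $C_i$ with $i=\dim C(D,v(f)Q)$. The bound $\nu(ev(f))\leq\dim C(D,v(f)Q)$ is then immediate from $\nu(ev(f))=\min\{i:ev(f)\in C_i\}$.

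For the equality case, write $m=v(f)$ and assume $C(D,mQ)\neq C(D,(m-1)Q)$. Suppose for contradiction that $\nu(ev(f))<\dim C(D,mQ)$. Since $C(D,(m-1)Q)$ is the immediate predecessor of $C(D,mQ)$ in the chain, this forces $ev(f)\in C(D,(m-1)Q)$, so there exists $g\in\mathcal{L}((m-1)Q)$ with $ev(g)=ev(f)$. Set $h=f-g$. Then $h\in\mathcal{L}(mQ)$ vanishes on $D$, hence $h\in\mathcal{L}(mQ-D)$; and since $v(g)\leq m-1<m=v(f)$, part (a) of Lemma \ref{properties}-style reasoning at the level of the valuation $v$ gives $v(h)=m$, so $h\notin\mathcal{L}((m-1)Q-D)$.

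To reach a contradiction I would use the standard formula $\dim C(D,rQ)=\dim\mathcal{L}(rQ)-\dim\mathcal{L}(rQ-D)$, which yields
$$\dim C(D,mQ)-\dim C(D,(m-1)Q)=\bigl[\dim\mathcal{L}(mQ)-\dim\mathcal{L}((m-1)Q)\bigr]-\bigl[\dim\mathcal{L}(mQ-D)-\dim\mathcal{L}((m-1)Q-D)\bigr].$$
Each bracketed term lies in $\{0,1\}$. The hypothesis that the code dimension strictly increases at $m$ forces the left-hand side to equal $1$, which in turn forces $\mathcal{L}(mQ-D)=\mathcal{L}((m-1)Q-D)$. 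But the function $h$ constructed above contradicts this equality, completing the proof.

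The step I expect to be mildly delicate is the bookkeeping between the indexing of the chain $(C_i)$ and the dimensions of the one-point codes, since the sequence $(C(D,mQ))_m$ contains repetitions; once this is pinned down the core argument is a single dimension count, which should be the cleanest part.
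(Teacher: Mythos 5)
Your proof is correct, but it takes a more roundabout route than the paper's. The paper argues directly: since $v(f)=m$, the function $f$ spans ${\mathcal L}(mQ)$ over ${\mathcal L}((m-1)Q)$, so $C(D,mQ)=C(D,(m-1)Q)+\langle ev(f)\rangle$; the hypothesis that these two codes differ then forces $ev(f)\notin C(D,(m-1)Q)$ at once, with no contradiction argument and no dimension count. You instead assume $\nu(ev(f))<\dim C(D,mQ)$, produce $g\in{\mathcal L}((m-1)Q)$ with $ev(g)=ev(f)$, and exhibit $h=f-g$ as an element of ${\mathcal L}(mQ-D)\setminus{\mathcal L}((m-1)Q-D)$, which contradicts the equality $\ell(mQ-D)=\ell((m-1)Q-D)$ forced by the dimension formula $\dim C(D,rQ)=\ell(rQ)-\ell(rQ-D)$. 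Every step checks out: the identification of $C(D,(m-1)Q)$ as the immediate predecessor $C_{i-1}$ in the chain is the right way to handle the repetitions, the strict triangle inequality gives $v(h)=m$ so $h\neq 0$ and $h\notin{\mathcal L}((m-1)Q-D)$, and the bracketed increments are indeed each $0$ or $1$. What your version buys is that it makes explicit the equivalence between $m\in H^*$ and $\ell(mQ-D)=\ell((m-1)Q-D)$, which the paper only records afterwards as Proposition \ref{ell}; the cost is length, since the direct argument needs only the single observation that $ev(f)$ generates the quotient $C(D,mQ)/C(D,(m-1)Q)$.
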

  \begin{proof}
The first statement is clear since $f\in {\mathcal L} (v(f)Q)$ and hence $ev(f) \in C(D,v(f)Q)$. For the second one, note that if $m=v(f)$, then ${\mathcal L} (mQ)= {\mathcal L} ((m-1)Q) + \langle f \rangle$, and hence $C(D,mQ)=C(D,(m-1)Q)+\langle ev(f) \rangle$. Thus $ev(f) \in C(D,mQ) \setminus C(D,(m-1)Q)$. 
  \end{proof} 

Note that it is not true in general that $\nu(ev(f)) = \dim (C(D,v(f)Q))$ because $ev$ only depends on the points $P_1,\dots,P_n$, and thus $ev(f)$ might be equal to $ev(g)$ with $g\in C(D,(v(f)-1)Q)$. For example, take a non-constant function $f \in {\mathcal L}(\infty Q)$. Then $v(f^q)=qv(f)$ but $ev(f^q)=ev(f)$.

Let $H=H(Q)= \{ h_1=0<h_2<\dots\}$ be the Weierstrass semigroup of $Q$. As we know, this is a numerical semigroup of finite genus $g$. Let $l_1,\dots,l_g$ be the gaps of $H$. Let us consider the set $H^*$ defined in the Introduction, namely 
$$
H^*=H^*(D,Q):=\{ m\in{\mathbb N}_0 : C(D,mQ)\neq C(D,(m-1)Q) \}.
$$
It is clear that $H^*$ consists of $n$ elements. Let us write $H^*= \{ m_1,\dots,m_n\}$. It is also clear that $H^*\subset H$ and for $m<n$ it holds that $m\in H^*$ if and only if $m\in H$. The following results may be useful for computing $H^*$. Remember that for a divisor $E$, $\ell(E)$ stands for the dimension of ${\mathcal L}(E)$.

  \begin{proposition}\label{ell}
$H^*= \{ m\in H : \ell(mQ-D)= \ell((m-1)Q-D)\}$.
  \end{proposition}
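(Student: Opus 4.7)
The plan is to express the dimension of $C(D,mQ)$ via the kernel of the evaluation map and then compare consecutive values of $m$.

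First, I would note that the evaluation map $ev: \mathcal{L}(mQ) \to \mathbb{F}_q^n$ has kernel $\mathcal{L}(mQ - D)$, giving the identity
\[
\dim C(D,mQ) = \ell(mQ) - \ell(mQ-D).
\]
Subtracting the analogous identity for $m-1$, one obtains
\[
\dim C(D,mQ) - \dim C(D,(m-1)Q) = \bigl[\ell(mQ)-\ell((m-1)Q)\bigr] - \bigl[\ell(mQ-D)-\ell((m-1)Q-D)\bigr].
\]
Each of the two bracketed quantities lies in $\{0,1\}$: the first equals $1$ precisely when $m \in H$, and the second is always $0$ or $1$ because increasing a divisor by a single point either preserves or raises $\ell$ by one.

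Next, since $\mathcal{L}((m-1)Q) \subseteq \mathcal{L}(mQ)$ we have $C(D,(m-1)Q) \subseteq C(D,mQ)$, so the left-hand side above is nonnegative. This rules out the case in which the first bracketed term is $0$ and the second is $1$. Hence only three cases can occur: $(0,0)$, $(1,0)$, $(1,1)$, yielding a dimension jump in exactly the middle case. Therefore $m \in H^*$ holds if and only if $m \in H$ \emph{and} $\ell(mQ-D) = \ell((m-1)Q-D)$, which is the claimed equality.

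There is no real obstacle here; the only subtlety is checking that the ``$(0,1)$ configuration'' is forbidden, which follows from the inclusion of codes rather than from any independent property of Riemann--Roch spaces. Everything else is a short bookkeeping calculation using $\dim C(D,mQ) = \ell(mQ) - \ell(mQ-D)$.
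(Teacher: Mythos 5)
Your proof is correct and rests on the same core identity as the paper's, namely $\dim C(D,mQ)=\ell(mQ)-\ell(mQ-D)$ coming from the kernel of the evaluation map. The only difference is organizational: you run a single uniform case analysis on the two increments (ruling out the $(0,1)$ configuration via the inclusion of codes), whereas the paper splits into $m<n$, where the kernel is trivial, and $m\ge n$, where it invokes $m-1,m\in H$ (using the standing assumption $n>2g+2$); your version quietly avoids needing that assumption here.
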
 
  \begin{proof}
If $m<n$ then $C(D,mQ)\neq C(D,(m-1)Q)$ if and only if $m\in H$ that is if and only if $m\in H^*$. If $m\geq n$ then the kernel of the evaluation map restricted to $\mathcal L (mQ)$ is $\ker (ev|_{{\mathcal L}(mQ)})={\mathcal L}(mQ-D)$. Since $m-1,m\in H$, then $C(D,mQ)\neq C(D,(m-1)Q)$ if and only if both kernels are equal.
  \end{proof}

Thus, for $m\ge n$, and since $\ell((n+2g-1)Q-D)=g$ and $H$ has $g$ gaps, we conclude that $g$ elements of $\{ n,\dots, n+2g-1 \}$ belong to $H^*$ while the other $g$ elements do not.

  \begin{corollary}\label{m+h}
Let $m\ge n$. If $m\not\in H^*$ then for all $h\in H$ it holds that $m+h\not\in H^*$. 
  \end{corollary}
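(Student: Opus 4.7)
The plan is to use Proposition \ref{ell} to recast everything in terms of the Riemann--Roch dimensions $\ell(mQ-D)$, and then exploit the multiplicative structure of $\mathcal{L}(\infty Q)$.

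First I would observe that, since $n > 2g+2$, every integer greater than or equal to $n$ lies in $H$ (they exceed the largest gap, which is at most $2g-1$). In particular both $m$ and $m+h$ belong to $H$ for any $h \in H$, so Proposition \ref{ell} applies and the condition $m \notin H^*$ (resp.\ $m+h \notin H^*$) is equivalent to $\ell(mQ-D) > \ell((m-1)Q-D)$ (resp.\ $\ell((m+h)Q-D) > \ell((m+h-1)Q-D)$).

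Next, from $m \notin H^*$ I would pick a function $g \in \mathcal{L}(mQ-D) \setminus \mathcal{L}((m-1)Q-D)$; such $g$ satisfies $v_Q(g) = -m$ exactly, and $v_{P_i}(g) \geq 1$ for $i=1,\dots,n$. From $h \in H$ I would pick $f \in \mathcal{L}(\infty Q)$ with $-v_Q(f)=h$, that is $f \in \mathcal{L}(hQ) \setminus \mathcal{L}((h-1)Q)$. Since $f$ has poles only at $Q$, it is regular at each $P_i$, hence the product $fg$ still satisfies $v_{P_i}(fg) \geq 1$ and $v_Q(fg) = v_Q(f)+v_Q(g) = -(m+h)$, while having no other poles. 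Therefore $fg \in \mathcal{L}((m+h)Q-D) \setminus \mathcal{L}((m+h-1)Q-D)$, which gives the strict inequality $\ell((m+h)Q-D) > \ell((m+h-1)Q-D)$ and thus $m+h \notin H^*$ by Proposition \ref{ell}.

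There is no real obstacle here; the only point requiring a little care is verifying that $v_Q(fg)$ is \emph{exactly} $-(m+h)$ rather than merely at least $-(m+h)$, which is why it was important to choose $g$ and $f$ attaining the maximal pole order (i.e.\ outside the smaller Riemann--Roch spaces), so that valuations add cleanly. The underlying idea is simply that $H^*$ is a union of cosets of $H$ at the top of $H$: once the evaluation map fails to pick up a new direction at level $m$, multiplying by any pole-order-$h$ function in $\mathcal{L}(\infty Q)$ shifts that failure up to level $m+h$.
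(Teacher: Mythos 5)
Your proof is correct and follows essentially the same route as the paper: both arguments take a function in $\mathcal{L}(mQ-D)\setminus\mathcal{L}((m-1)Q-D)$ witnessing $m\notin H^*$, multiply by a function of pole order exactly $h$ at $Q$, and observe that the product lies in $\mathcal{L}((m+h)Q-D)\setminus\mathcal{L}((m+h-1)Q-D)$. Your write-up just makes explicit the valuation bookkeeping that the paper leaves implicit.
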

  \begin{proof}
If $m\not\in H^*$ then there exists a non-zero function $f\in {\mathcal L}(mQ-D)\setminus {\mathcal L} ((m-1)Q-D)$. Take a function $\phi\in {\mathcal L} (hQ)$ such that $v(\phi)=h$. Then  $f\phi\in {\mathcal L}((m+h)Q-D)\setminus {\mathcal L} ((m+h-1)Q-D)$, and hence $m+h\not\in H^*$.
  \end{proof}
 
  \begin{corollary}\label{sim}
If the divisors $D$ and $nQ$ are linearly equivalent, $D\sim nQ$, then  $H^*\cap\{ n,\dots,n+2g-1\}=\{ n+l_1,\dots,n+l_g \}$, hence $H^*= (H\cap \{ 1,\dots,n-1 \})\cup\{ n+l_1,\dots,n+l_g\}$.
  \end{corollary}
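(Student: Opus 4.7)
The plan is to use the linear equivalence $D\sim nQ$ to convert $\ell(mQ-D)$ into $\ell((m-n)Q)$, and then invoke Proposition \ref{ell} together with the defining property of the Weierstrass semigroup.

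First I would pick a function $u\in{\mathbb F}_q({\mathcal X})^*$ with principal divisor $(u)=D-nQ$; such $u$ exists precisely because $D\sim nQ$. For every integer $m$, multiplication by $u$ gives a linear isomorphism ${\mathcal L}(mQ-D)\to {\mathcal L}((m-n)Q)$, $f\mapsto uf$, since $(uf)+(m-n)Q=(f)+mQ-D$. Hence $\ell(mQ-D)=\ell((m-n)Q)$ for every $m\in{\mathbb Z}$.

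Next I would apply Proposition \ref{ell}. For $m\in\{n,\dots,n+2g-1\}$, both $m$ and $m-1$ exceed $2g-1$ (using the standing assumption $n>2g+2$), so both lie in $H$. Proposition \ref{ell} therefore reduces the condition $m\in H^*$ to the equality $\ell((m-n)Q)=\ell((m-n-1)Q)$. Writing $k=m-n$, this is the standard characterization of a gap: for $k\geq 1$ the equality holds iff $k\notin H$, while for $k=0$ we have $\ell(0)=1\neq 0=\ell(-Q)$ so the equality fails. As $m$ ranges over $\{n,\dots,n+2g-1\}$, the parameter $k$ ranges over $\{0,1,\dots,2g-1\}$, and the gaps in this range are exactly $l_1,\dots,l_g$. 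This gives the first identity $H^*\cap\{n,\dots,n+2g-1\}=\{n+l_1,\dots,n+l_g\}$.

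Finally, the paper has already observed that for $m<n$ one has $m\in H^*\Leftrightarrow m\in H$; combining this with the identity just proved and with $H^*\subseteq\{0,\dots,n+2g-1\}$ yields the stated decomposition of $H^*$. There is no real obstacle here; the only point requiring a little care is the edge case $k=0$ (which correctly excludes $m=n$ from $H^*$) and the routine verification that the $u$-multiplication isomorphism is also consistent in the degenerate ranges where both Riemann--Roch spaces are trivial.
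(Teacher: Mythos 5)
Your argument is correct and reaches the stated conclusion, but it takes a genuinely different route from the paper's. The paper's proof is a counting argument: it only checks the single membership $n\notin H^*$ (immediate from $D\sim nQ$ and Proposition \ref{ell}), propagates this via Corollary \ref{m+h} to conclude $n+h\notin H^*$ for all $h\in H$, and then invokes the previously established fact that exactly $g$ elements of $\{n,\dots,n+2g-1\}$ lie in $H^*$ to force the remaining $g$ slots to be $n+l_1,\dots,n+l_g$. You instead identify the members of $H^*\cap\{n,\dots,n+2g-1\}$ directly: the equivalence $D\sim nQ$ gives $\ell(mQ-D)=\ell((m-n)Q)$ for all $m$, and Proposition \ref{ell} then reduces membership to the classical gap condition $\ell(kQ)=\ell((k-1)Q)$ for $k=m-n$. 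Your computation is more self-contained (it does not need Corollary \ref{m+h} or the cardinality count $\ell((n+2g-1)Q-D)=g$) and it establishes both inclusions positively rather than by elimination; the paper's version is shorter because it reuses machinery already in place. One small slip: with your normalization $(u)=D-nQ$ the map $f\mapsto uf$ sends $\mathcal{L}((m-n)Q)$ into $\mathcal{L}(mQ-D)$, not the other way around; you want $(u)=nQ-D$ for the direction you wrote (either choice of course yields $\ell(mQ-D)=\ell((m-n)Q)$, so nothing is lost). Note also that the second displayed identity in the corollary tacitly includes $0\in H^*$, so the union should really read $H\cap\{0,\dots,n-1\}$; this imprecision is in the paper's statement, not in your argument.
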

  \begin{proof}
If $D\sim nQ$ then  $n\not\in H^*$ and hence, according to Corollary \ref{m+h}, $n=n+h_1,\dots,n+h_g\not\in H^*$. The statement follows by cardinality reasons.
  \end{proof}

Let $f\in {\mathcal L}(\infty Q)$. If $v(f)\in H^*$ then, by Lemma \ref{le:3.1}, we have $\nu(ev(f))=\dim (C(v(f)))$. For $i=1,\dots,n$, let $f_i\in {\mathcal L}(\infty Q)$ be such that $v(f_i)=m_i$. Thus, according to Lemma \ref{properties} (b),   ${\mathcal B}=\{ ev(f_1),\dots,ev(f_n) \}$ is a basis of ${\mathbb F}_q^n$ and the sequence of codes $(C_i)$ is given by
$$
C_i=\langle ev(f_1),\dots,ev(f_i) \rangle=C(D,m_iQ), \; i=1,\dots,n.
$$
Our sequence $(C(D,m_iQ))$ does not contain the code $C_0=(0)$. If we want to include it (see Section 4 for example) we simply take $m_0=-1$ and $C(D,m_0Q)=(0)$.

  \begin{proposition}\label{wbp}
If $m_i+m_j\in H^*$ then $(ev(f_i),ev(f_j))$ is a well behaving pair.
  \end{proposition}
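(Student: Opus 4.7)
The plan is to translate well-behavedness, which is a statement about $\nu$ applied to componentwise products, into a statement about dimensions of the one-point codes via Lemma \ref{le:3.1}, and then exploit the hypothesis $m_i+m_j\in H^*$ to get a strict dimension jump.

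First, I would observe that evaluation is a ring homomorphism on $\mathcal{L}(\infty Q)$ into $\mathbb{F}_q^n$ with respect to componentwise product, so $ev(f_r)*ev(f_s)=ev(f_rf_s)$ and $v(f_rf_s)=v(f_r)+v(f_s)=m_r+m_s$. Therefore Lemma \ref{le:3.1} applies to both $f_if_j$ and $f_rf_s$: we always have $\nu(ev(f_rf_s))\leq \dim C(D,(m_r+m_s)Q)$, and since $m_i+m_j\in H^*$ by hypothesis, $\nu(ev(f_if_j))=\dim C(D,(m_i+m_j)Q)$.

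Next, for any $(r,s)<(i,j)$ the ordering on $\{1,\dots,n\}^2$ gives $r\leq i$, $s\leq j$ and $(r,s)\neq(i,j)$; since the $m_k$'s are strictly increasing, this forces $m_r+m_s<m_i+m_j$, hence $m_r+m_s\leq m_i+m_j-1$. Inclusion of Riemann--Roch spaces gives the chain
\[
\nu(ev(f_r)*ev(f_s))\;\leq\;\dim C(D,(m_r+m_s)Q)\;\leq\;\dim C(D,(m_i+m_j-1)Q).
\]
Finally, because $m_i+m_j\in H^*$, the definition of $H^*$ tells us $C(D,(m_i+m_j)Q)\neq C(D,(m_i+m_j-1)Q)$, so the dimension strictly increases at step $m_i+m_j$:
\[
\dim C(D,(m_i+m_j-1)Q)\;<\;\dim C(D,(m_i+m_j)Q)\;=\;\nu(ev(f_i)*ev(f_j)).
\]
Concatenating the inequalities yields $\nu(ev(f_r)*ev(f_s))<\nu(ev(f_i)*ev(f_j))$, which is exactly the well-behaving condition.

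I do not foresee a genuine obstacle here; the argument is essentially bookkeeping once one notices that Lemma \ref{le:3.1} supplies both an upper bound (in general) and an equality (precisely when $m_i+m_j\in H^*$), so the hypothesis is used exactly at the one place a strict jump is needed. The only point requiring a small amount of care is the statement $m_r+m_s<m_i+m_j$, which relies on the strict monotonicity of the enumeration of $H^*$ together with the definition of the partial order $(r,s)<(i,j)$.
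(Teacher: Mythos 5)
Your proof is correct and follows essentially the same route as the paper: additivity of the pole order under products, Lemma \ref{le:3.1} giving equality of $\nu(ev(f_if_j))$ with $\dim C(D,(m_i+m_j)Q)$ precisely because $m_i+m_j\in H^*$, and the upper bound $\nu(ev(f_rf_s))\le\dim C(D,(m_r+m_s)Q)<\dim C(D,(m_i+m_j)Q)$ for $(r,s)<(i,j)$. Your write-up merely makes explicit the intermediate step through $\dim C(D,(m_i+m_j-1)Q)$ that the paper leaves implicit.
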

  \begin{proof}
For $\phi_1,\phi_2\in {\mathcal L}(\infty Q)$ we have that $v(\phi_1\phi_2)=v(\phi_1)+v(\phi_2)$. 
If $m_i+m_j\in H^*$ then $\nu(ev(f_i)*ev(f_j))=\nu(ev(f_if_j))=\dim (C(D,v(f_if_j)Q))=\dim (C(D,(m_i+m_j)Q))$. If $(r,s)<(i,j)$ then $v(f_rf_s)<v(f_if_j)$ and hence $\nu(ev(f_r)*ev(f_s))=\nu(ev(f_rf_s))< \dim (C(D,(m_i+m_j)Q))$.
  \end{proof}

Thus from the bound in Section \ref{se:bo} we get a bound for one-point codes as follows. For $i=1,\dots,n$, consider the sets
$$
\Lambda^*_i=\{ m\in H^* : m=m_i+m_j \mbox{ with }  m_j\in H^* \}.
$$
If $m\in m_i+H\setminus H^*$ then $m=m_i+h$ for some $h\in H\setminus H^*$ and thus  $m\not\in H^*$ according to Corollary \ref{m+h}. Thus the sets $\Lambda^*_i$ can also be written as $\Lambda^*_i=\{m\in H^* : m-m_i\in H \}=(m_i+H)\cap H^*$.
According to  Propositions \ref{agbound} and \ref{wbp}, we have  that $\mathrm{wt}({\mathbf c})\ge \# \Lambda^*_r$ for all ${\mathbf c}\in C(D,m_rQ)\setminus C(D,m_{r-1}Q)$. Define
$$
d^*(i):=\min \{ \# \Lambda^*_r : r\le i  \}.
$$ 
Then $d(C(D,m_iQ))\geq d^*(i)$,  or equivalently 

  \begin{theorem} \label{asterisk}
For a non-negative integer $m$, we have $d(C(D,mQ))\geq d^*(\dim (C(D,mQ)))$.
  \end{theorem}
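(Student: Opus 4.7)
The plan is to invoke the theorem at the end of Section~\ref{se:bo} for the chain $C_1\subset C_2\subset\cdots\subset C_n$ built from $H^*$, and then use Proposition~\ref{wbp} to translate the sets $\Lambda_r$ (defined via well-behaving pairs of basis vectors) into the purely semigroup-theoretic sets $\Lambda^*_r$.

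First I would observe that for any $m\ge 0$, if $i=\dim(C(D,mQ))$ then $C(D,mQ)=C_i=C(D,m_iQ)$. Indeed, the chain $C_1\subsetneq C_2\subsetneq\cdots\subsetneq C_n$ gains exactly one dimension at each step by the very definition of $H^*$, and $C_1=C(D,0\cdot Q)=\langle ev(1)\rangle$ is one-dimensional, so $\dim(C_i)=i$. Hence it suffices to prove $d(C(D,m_iQ))\ge d^*(i)$ for every index $i$.

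The theorem of Section~\ref{se:bo}, applied to the chain $(C_i)$ with basis $\mathcal{B}=\{ev(f_1),\dots,ev(f_n)\}$, yields $d(C(D,m_iQ))\ge\min\{\#\Lambda_r:r\le i\}$. To pass from this to $d^*(i)=\min\{\#\Lambda^*_r:r\le i\}$, I would show $\#\Lambda_r\ge\#\Lambda^*_r$ for each $r$. Proposition~\ref{wbp} tells us that whenever $m_r+m_j\in H^*$ the pair $(ev(f_r),ev(f_j))$ is well-behaving, i.e.\ $ev(f_j)\in\Lambda_r$. Consequently $\#\Lambda_r$ is at least the number of indices $j\in\{1,\dots,n\}$ with $m_r+m_j\in H^*$, and via the bijection $j\mapsto m_r+m_j$ this count equals $\#\Lambda^*_r$ (using the original definition of $\Lambda^*_r$, with no appeal to the reformulation $(m_r+H)\cap H^*$ needed at this stage). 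Taking the minimum over $r\le i$ then gives the claimed inequality.

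The main obstacle is purely bookkeeping: keeping straight the three parallel parametrizations (code indices $r$, semigroup elements $m_r$, and basis vectors $ev(f_r)$) and matching the Section~\ref{se:bo} notion of well-behaving with the semigroup-based definition of $\Lambda^*_r$. Once Lemma~\ref{le:3.1}, Corollary~\ref{m+h} and Proposition~\ref{wbp} are in hand, the theorem is essentially an immediate corollary of the general bound from Section~\ref{se:bo}.
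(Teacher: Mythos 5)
Your proposal is correct and follows essentially the same route as the paper: the paper likewise combines Proposition~\ref{agbound} (equivalently, the theorem closing Section~\ref{se:bo}) with Proposition~\ref{wbp} and the injectivity of $j\mapsto m_r+m_j$ to get $\mathrm{wt}(\mathbf{c})\ge\#\Lambda^*_{\nu(\mathbf{c})}$, then takes the minimum over $r\le i$. The only cosmetic difference is that you explicitly record the identification $C(D,mQ)=C_{\dim(C(D,mQ))}$, which the paper leaves implicit.
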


We call this inequality the {\em $d^*$ bound} for one-point codes.
Let us remember that the classical bound on the minimum distance of an code is given by the Goppa estimate $d(C(D,mQ))\ge d_G(C(D,mQ)):=n-m$. $d^*$ improves the Goppa bound as the next result shows (see also Proposition 37 in \cite{AG}). The first element in $H\setminus H^*$ is denoted by $\pi=\pi(H)$. Note that $\pi\ge n$. 

  \begin{proposition}\label{goppa}
For all $i=1,\dots,n$, we have $d^*(i)\ge d_G(C(D,m_iQ))$. If $m_i<\pi-l_g$ then  equality holds, $d^*(i)= d_G(C(D,m_iQ))$.
  \end{proposition}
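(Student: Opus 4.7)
The plan is to apply a simple inclusion--exclusion inside the ambient set $U := H \cap [0, n+2g-1]$. Since $H$ has exactly $g$ gaps, all bounded by $l_g \le 2g-1$, we have $\#U = n+g$. Both $H^*$ and $A_r := (m_r+H) \cap [0, n+2g-1]$ are subsets of $U$ (the first because $H^* \subseteq H$ and $H^* \subseteq [0,n+2g-1]$, the second because $m_r \in H$ makes $m_r + H \subseteq H$), and by definition $\Lambda_r^* = A_r \cap H^*$.

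To prove the Goppa-type inequality I would split on whether $m_r < n$. If $m_r \ge n$ there is nothing to do, because $n - m_r \le 0$ while $m_r = m_r + 0 \in \Lambda_r^*$ gives $\#\Lambda_r^* \ge 1$. If $m_r < n$ then $n + 2g - 1 - m_r \ge 2g > l_g$, so the interval $[0, n+2g-1-m_r]$ already contains all $g$ gaps of $H$, yielding
$$
\#A_r \;=\; \#\bigl(H \cap [0, n+2g-1-m_r]\bigr) \;=\; (n+2g-m_r) - g \;=\; n+g-m_r.
$$
Inclusion--exclusion in $U$ then gives
$$
\#\Lambda_r^* \;\ge\; \#A_r + \#H^* - \#U \;=\; (n+g-m_r) + n - (n+g) \;=\; n - m_r.
$$
Since the $m_r$ are strictly increasing, taking the minimum over $r \le i$ produces $d^*(i) \ge n - m_i$.

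For the equality clause, assume $m_i < \pi - l_g$. The first part already gives $d^*(i) \ge n - m_i$, and by definition $d^*(i) \le \#\Lambda_i^*$, so it suffices to show $\#\Lambda_i^* = n - m_i$, i.e.\ that the inclusion--exclusion above is sharp at $r=i$. Sharpness amounts to $A_i \cup H^* = U$, equivalently $U \setminus H^* \subseteq A_i$. Pick any $x \in U \setminus H^*$; then $x \in H \setminus H^*$, so $x \ge \pi$ by the very definition of $\pi$. Combined with the hypothesis $m_i < \pi - l_g$ this gives $x - m_i > l_g$, so $x - m_i$ is larger than every gap and therefore lies in $H$. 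Hence $x \in m_i + H \subseteq A_i$, as required.

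The argument is elementary; the only genuinely substantive step is recognizing that the inclusion--exclusion becomes sharp under $m_i < \pi - l_g$, which in turn rests on the Weierstrass gap bound $l_g \le 2g-1$ and the defining property of $\pi$ as the smallest element of $H \setminus H^*$.
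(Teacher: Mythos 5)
Your proof is correct and is essentially the paper's argument in disguise: the inclusion--exclusion in $U$ amounts to the bound $\#(H^*\setminus\Lambda^*_r)\le\#(U\setminus A_r)=m_r$, which is exactly the paper's use of the semigroup identity $\#(H\setminus(m_r+H))=m_r$ (there cited from H{\o}holdt--van Lint--Pellikaan rather than re-derived by counting gaps in a finite window), and your sharpness condition $U\setminus H^*\subseteq A_i$ is the contrapositive of the paper's $H\setminus(m_iQ+H)$-elements-lie-below-$\pi$ argument. The one unstated point is that in the equality case the formula $\#A_i=n+g-m_i$ (which you derived assuming $m_i<n$) still applies; this is fine because $m_i<\pi-l_g$ forces $m_i+l_g<\pi\le n+2g-1$, so the window $[0,n+2g-1-m_i]$ still contains all $g$ gaps.
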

  \begin{proof}
For the first statement it suffices to show  that $\#(H^*\setminus \Lambda_r^*)\le m_r$ for all $r$. Since $\Lambda^*_i=(m_i+H)\cap H^*$, we have $H^*\setminus \Lambda_i^*\subseteq H\setminus (m_i+H)$ and this follows from the fact that $\#(H\setminus (m_r+H))=m_r$ (see \cite{HLP}, Lemma 5.15).  If $m_i+l_g<\pi$, then all elements in $H\setminus (m_i+H)$ are smaller than $\pi$ and hence $H^*\setminus \Lambda_i^*= H\setminus (m_i+H)$. 
  \end{proof}

   \subsection{$d^*$ and the generalized order bounds of Beelen and Duursma-Kirov-Park}\label{sec:BDKP}
   
The bound $d^*$ can also be obtained from the generalized order bounds of Beelen and Duursma-Kirov-Park. Let us show first how to get $d^*$ from the Beelen generalized order bound $d_B$ stated in \cite{Beelen}. Let $m_i\in H^*$ and consider the code $C(D,m_iQ)$.  The Beelen bound applies to the duals of evaluation codes. Thus, let $W$ be a canonical divisor with simple poles and residue 1 at all points  $P\in {\rm supp}(D)$ and let $G=D+W-m_iQ$. It is well known that $C(D,m_iQ)=C(D,G)^{\perp}$ (see \cite{Sti}). By using the notation as in \cite{Beelen}, for $r=0,1,2,\dots$, consider the divisors 
$$
F^{(r)}:=G+rQ = F^{(r)}_1+F^{(r)}_2=: (D+W)+((r-m_i)Q).
$$
Note that all the divisors $F^{(r)},F^{(r)}_1,F^{(r)}_2$ above have support disjoint from $D$. For a divisor $E$, let $H(Q,E)$ be the Weierstrass set of $Q$ relative to $E$,
$$
H(Q,E)=-v_Q \left( \bigcup_{\deg(E+sQ)\ge 0} \mathcal{L}(E+sQ) \setminus\{ 0\} \right).
$$ 
In  our case, for all $r=0,1,\dots$, we have $H(Q, F^{(r)}_2)=H(Q,(r-m_i)Q)=H(Q,0)=H$, the usual Weierstrass semigroup of $Q$. The Beelen bound states that
$$
d(C(D,m_iQ))\ge \min \{ \# N(F_1^{(r)},F_2^{(r)}) : r=0,1,\dots\}
$$
where 
\begin{eqnarray*}
N(F_1^{(r)},F_2^{(r)}) &=&\{ (t,s) : t\in H(Q,F_1^{(r)}), s\in H(Q,F_2^{(r)}), t+s=v_Q(G)+1  \} \\
                       &=&\{ (t,s) : t\in H(Q,D+W), s\in H, t+s=1-m_i  \}.
\end{eqnarray*}
According to the Rieman-Roch theorem, for an integer $m$ it holds that $1-m\in H(Q,D+W)$ if and only if $\ell (mQ-D)= \ell ((m-1)Q-D)$. Thus  for $m\in H$ the conditions $m\in H^*$ and $1-m\in H(Q,D+W)$ are equivalent. Consequently
\begin{eqnarray*}
\# N(F_1^{(r)},F_2^{(r)}) &=& \# \{ s\in H : 1-(s+m_i) \in H(Q,F_1^{(r)})  \} \\
                          &=& \# \{ s\in H : s+m_i \in H^* \}
\end{eqnarray*}
as $s\in H$ implies $s+m_i\in H$. 
Finally observe that while the sets $\Lambda^*_i$ and $\{ s\in H : s+m_i \in H^* \}$ count different objects, they are of the same cardinality: the map $m\mapsto m+m_i$ gives a bijection from $\Lambda^*_i$ to $\{ s\in H : s+m_i \in H^* \}$. Thus, for one-point codes, the bound $d^*$ can be seen as a particular case of the Beelen bound $d_B$, relative to the choice of $Q,Q,\dots $ as infinite sequence of points not in ${\rm supp}(D)$ and the divisors  
$F_1^{(r)}=D+W ,F_2^{(r)}=(r-m_i)Q$. In particular in may happen that $d^*<d_B$ (for an accurate choice of the infinite sequence of points and the divisors $F_1^{(r)},F_2^{(r)}$), 
in the same way as it may happen that $d_{ORD}<d_B$ (see Example 8 of \cite{Beelen}).

Let us show briefly how to obtain $d^*$ from the generalized order bound of Duursma, Kirov and Park. Consider again the code $C(D,m_iQ)$. In the formulation of \cite{Du2,Du3,Du4}, if $\mathbf{c}\in C(D,m_iQ)\setminus C(D,m_{i-1}Q)$, then 
$$
\mathrm{wt}({\mathbf c}) \geq \# (\Delta_Q(D-m_iQ)\cap \{ (m-m_i)Q : m\ge m_i \} )
$$ 
where for a divisor $E$, $\Delta_Q(E)$ is defined as
$$
\Delta_Q(E)=\{ A : \mathcal{L}(A)\neq \mathcal{L}(A-Q),  \mathcal{L}(A-E)\neq \mathcal{L}(A-Q-E) \}.
$$
The same argument as in the case of $d_B$ proves that the sets $\Lambda^*_i$ and $(\Delta_Q(D-m_iQ)\cap \{ (m-m_i)Q : m\ge m_i \} )$ are of the same cardinality. 
This shows that  $d^*$ can also be  obtained from the extended Duursma-Kirov-Park order bound.

 On the other hand, the choice of the sets $\Lambda^*_i$ (instead of the counting made in the Beelen and Duursma-Kirov-Park bounds) has some technical advantages. Firstly it does not involve more divisors that the ones naturally associated to the code $C(D, mQ) $. And secondly, in contrast to what happens with those bounds,  $d^*$ allows us to study improved codes very easily. Also it allows us to extend the same idea to all generalized Hamming weights (see Section \ref{se:ghw}). In fact, for these two problems  $d^*$ works even better than the original order bound $d_{ORD}$. As discussed in Section \ref{se:reb}, $d^*$ extends exactly $d_{ORD}$ to one-point codes.

   \subsection{Improved codes}\label{Sect_improved}

Let $\delta$ be an integer, $0<\delta\le n$.   
In the same way as the order bound allows us to construct codes with designed minimum distance  $\delta$ and dimension as large as possible, see \cite{HLP}, the bound $d^*$ shows how to construct similar codes from  sequences $(C(D,m_i Q))$, see \cite{AG}.  Specifically, given $\delta$ let us consider the {\em improved code}
$$
C(D,Q,\delta)=\langle \{ ev(f_i) : \#\Lambda^*_i\ge \delta  \}  \rangle
$$
where $f_i\in {\mathcal L}(\infty Q)$ with $v(f_i)=m_i$.  From Lemma \ref{properties} (a), and the discussion before Theorem  \ref{asterisk}, it is clear that the minimum distance of $C(D,Q,\delta)$ is at least $\delta$. 

The sequence $(\Lambda^*_i)$ is said to be {\em monotone} for $\delta$ if for every $i,j$ such that $\#\Lambda^*_i\ge \delta$ and $\#\Lambda^*_j<\delta$ we have that $i<j$. If $(\Lambda^*_i)$ is  monotone for $\delta$ it is clear that $C(D,Q,\delta)$ is a usual one-point code, so improved codes only improve one-point codes for those $\delta$ for which the sequence is not monotone. In this case the code $C(D,Q,\delta)$ depends on the choice of the set $\{ f_1,\dots,f_n \}$. In fact, if $\#\Lambda^*_i=\delta$ and $\#\Lambda^*_j<\delta$ for some $j<i$, then $v(f_i+f_j)=v(f_i)$ but in general $ev(f_j)\not\in C(D,Q,\delta)$, hence $ev(f_i+f_j)\not\in C(D,Q,\delta)$. Thus we have a collection of improved codes with designed distance $\delta$, depending on the collection of sets $\{ f_1,\dots,f_n \}$.

   \subsection{Worked examples}

We compute $H^*$ for some examples.

  \begin{example} \label{castle} (Codes on Castle curves) 
A curve $\mathcal X$  defined over  ${\mathbb F}_{q}$ is said to be {\em Castle} if there is a rational point $Q$ such that the Weierstrass semigroup at $Q$, $H=H(Q)$, is symmetric and $qh_2+1=\# {\mathcal X}({\mathbb F}_{q})$ (where $h_2$ is the first nonzero element of $H$). If $D$ is the sum of all rational points of $\mathcal X$ except $Q$, the codes $C(D,mQ)$ are called Castle codes, see \cite{MST}. It is simple to see that for Castle curves we have $D\sim nQ$, hence 
$H^*\cap\{ n,\dots,n+2g-1\}=\{ n+l_1,\dots,n+l_g \}$ according to Proposition \ref{sim}. In Section 4 we shall see that, being the semigroup $H$ symmetric, we have $H^*=H\setminus (n+H)$. Recall that the family of Castle codes includes  Hermitian, generalized Hermitian, Norm-trace, Suzuki, Ree and many of the most known codes.  To study a concrete example, let us consider the Suzuki curve $\mathcal X$ over ${\mathbb F}_{8}$ (see \cite{MST} again). This curve has genus $g=14$ and 65 rational points. A plane model of $\mathcal X$ is given by the equation  $Y^8Z^2-YZ^{9} = X^2(X^8-XZ^7)$.  This model is non-singular except at the point $(0 : 1 : 0)$. Being this singularity uni-branched, the unique point $Q$ lying over $(0 : 1 : 0)$ is rational. Let us consider the codes $C(D,mQ)$, where $D$ is the sum of all rational points of $\mathcal X$ except $Q$. The Weierstrass semigroup at $Q$ is known to be $H=\langle 8,10,12,13\rangle$. A straightforward computation gives the sequence $(\#\Lambda^*_i)$: $(64$, $56$, $54,50,49,48,46,44,43,42,41,40,38,36,35,34,33,32,31,30,29,28,28,26,26,24,23,22,21$,
%%%%%%%%%%
\linebreak
%%%%%%%%%% 
$20,21,18,20,16,18,16,14,13,14,10,14,8,13,10,10,9,9,6,9,8,4,6,5,5,4,6,5,3,2,3,3,2$, $\;$ 
$1,1)$. 
This sequence is monotone for $\delta=3,5,6,9,13,14,18,20,21$. For example the code $C(D,70Q)$ has dimension 55 and distance at least 4 (that is $d^*(55)=4$), whereas $C(D,Q,4)$ has dimension 57.
  \end{example}

  \begin{example}\label{ej} (Two families of codes from a curve over ${\mathbb F}_{16}$)
The computation of $H^*$ for long codes can be carried often to the computation of $H^*$ for much shorter codes. Let $C(D,mQ)$ be a code and let $n''$ be the largest integer for which equality in the Goppa bound holds. Then $n''<n$ and there exists a divisor $D''\le D$ such that $D''\sim n''Q$. Hence, for $m\ge n$ we have $\ell(mQ-D)=\ell((m-n'')Q-D')$ where $D'=D-D''$. This leads us to considering the codes $C(D',m'Q)$ of length $n'=n-n''$. 
To give an example of this situation let us consider the curve ${\mathcal X}$ over ${\mathbb F}_{16}$ defined by the affine equation
$$
y^{15}=p(x):=\frac{x(x^{14}-1)}{x-1}=x^{14}+x^{13}+\cdots + x.
$$  
Let us study the rational points of $\mathcal X$. Firstly there is just one point $Q$ over $x=\infty$. Regarding the affine points, note that the polynomial $p(x)$ has 2 roots in ${\mathbb F}_{16}$, namely 0 and 1. In fact, if $\alpha\neq 0,1$ is a root of $p(x)$, then $\alpha^7=1$ and $7\nmid 15$. These roots give two points, $R_1=(0,0)$ and $R_1=(1,0)$. We consider now the morphism $\phi=x$, $\phi:{\mathcal X}\rightarrow {\mathbb P}^1(\overline{\mathbb F}_{16})$ of order 15, where $\overline{\mathbb F}_{16}$ denotes the algebraic closure of $\mathbb{F}_{16}$. For $\alpha\in {\mathbb F}_{16}$, $\alpha\neq 0,1$, from the equation of $\mathcal X$, we have $y^{15}=\alpha(\alpha^{14}-1)/(\alpha-1)=1$, so that there are 15 rational points over each $\phi(\alpha)$. Write
$$
\mbox{div}(x-\alpha)=\sum_{i=1}^{15}P^i_\alpha-15 Q. 
$$
Thus $\mathcal X$ has $(16-2)\cdot 15+2+1=213$ rational points. To compute its genus observe that 
$$
y^{15}=x(x-1)(x-\alpha_1)^2\ldots (x-\alpha_6)^2 ,
$$  
where $\alpha_i^7\neq 1$, $\alpha_i\not\in {\mathbb F}_{16}$. 
As the extension ${\mathbb F}_{16}({\mathcal X})|{\mathbb F}_{16}(x)$ is Kummer, the genus can be computed via the Riemann-Hurwitz formula \cite{Sti},
   $$
2g-2=15(-2)+9(14)=96
   $$
and $g=49$. Note that $\mathcal X$ attains the record of rational points among all curves genus 49 over ${\mathbb F}_{16}$. Finally let us compute the  Weierstrass semigroup $H$ at $Q$. We have seen that $-v_Q(x)=15$. In the same way $\mbox{div}_{\infty}(y)=14Q$, so $14,15\in H$. Let
   $$
z:=y^8/((x-\alpha_1)\cdots (x-\alpha_6)).
   $$
It is easy to compute $\mbox{div}_{\infty}(z)=22Q$, hence $22\in H$ and thus $\langle 14,15,22\rangle\subseteq H$. Since both semigroups have equal genus we conclude that equality holds. Then 
   \begin{align*}
H(Q)= & \langle 14, 15, 22\rangle=\{ 0,14,15,22,28,29,30,36,37,42,43,44,45,50,51,52,56,57,58,\\
   {} & 59,60,64,65,66,67, 70,71,72,73,74,75, 
78,79,80,81,82,84,85,86,87,88,89,\\
  {} & 90,92,93,94,95,96,97,98,99,100,101,102,103,104,105,\ldots  \}.
   \end{align*}
Note that $2g-1= 97\in H$ and so $H$ is not symmetric. In order to construct codes from this curve let us consider the divisors $D'=R_1+R_2$, and for $\alpha\in {\mathbb F}_{16}$, $\alpha\neq 0,1$
$$
D''_\alpha=\sum_{i=1}^{15}P^i_\alpha \; , \; D''=\sum_{\alpha} D''_\alpha.
$$
According to our previous computations, $D''_\alpha\sim 15 Q$ and hence $D''\sim 210 Q$. 
Let $D=D'+D''$ be the sum of all affine points of $\mathcal X$,
$n=212=\deg (D)$ and consider the codes of length $n$, $C(D,mQ)$, $m=0,\dots,n+2g-1=309$. In order to determine $H^*=H^*(D,Q)$ we have to compute $\ell(mQ-D)$ for $m\ge n$. But since $D''\sim 210 Q$, then $\ell(mQ-D)=\ell((m-210)Q)-D')$. This fact leads us to considering the codes $C(D',m'Q)$ for $m'=2,\dots,2g+1=99$. The length of these codes is $n'=2$ and $C(D',0Q)=\langle  (1,1) \rangle$. Thus there exists just one $m'$ for which the dimension increases. Clearly, this is not the case for any gap of $H$, so $m'$ must be a non-gap. Looking at the generator matrix $(1,1)$ of $C(D',0Q)$ we conclude that this $m'$ is the smallest order of a function $f$ in $ {\mathcal L} (\infty Q)$ such that $f(R_1)\neq f(R_2)$. Such a function is clearly $f=x$ and hence $m'=15$. Thus,  
$$
H(D,Q)^* \cap \{ n,\dots,n+2g-1\}=\{ n-2+l : \mbox{ $l$ is a gap of $H$ and $l\ge 2$} \} \cup \{ n-2+15 \}.
$$
Once $H^*$ is known we can compute the dimensions of all codes $C(D, mQ)$ and apply Theorem \ref{asterisk} to estimate the minimum distances. Note that for large $m$ we do not obtain good parameters. In fact, as $D''_\alpha\sim 15 Q$, for all $m<n$, $m$ multiple of $15$, the true minimum distance of $C(D,mQ)$ equals the Goppa estimate. In particular the minimum distance distance of $C(D, 210 Q)$ is $d=2$. The bound $d^*$ gives $d\ge 2$ for $m=224$ (that is, for dimension $k\le 175$) and hence all codes $C(D,mQ)$, $m=210,\dots, 224$ have true minimum distance $d=2$.  

In order to obtain codes with better parameters (that is, better minimum distance) the usual approach is to consider another divisor $G$. We shall show that this goal can also be accomplished by taking a slightly different $D$. Consider the codes $C(D'',mQ)$ of length $n''=210$. Then the function from which the codeword of weight 2 arises belongs to the kernel of the evaluation map. The set $H^*=H^*(D'',Q)$ can be now computed by using Corollary \ref{sim}, and 
$H^*\cap\{ n'',\dots,n''+2g-1\}=\{ n''+l_1,\dots,n''+l_g \}$, where $l_1,\dots,l_g$ are the 49 gaps of $H$. It is not necessary to apply the bound $d^*$ to see that the minimum distance of these codes is larger for $m\ge n''$. For example, from the improved Goppa bound we know that the minimum distance of $C(D'', 210 Q)$ satisfies $d\ge n''-210+\gamma_2=\gamma_2$, where $\gamma_2$ is the usual gonality of $\mathcal X$, see \cite{M} . It is not easy to compute $\gamma_2$, but at the first sight we have $\gamma_2\ge \#{\mathcal X}({\mathbb F}_{16})/\# {\mathbb P}^1({\mathbb F}_{16})$, hence $\gamma_2\ge 13$ (so $\gamma_2=13$ or 14) and $d\ge 13$ as well. 
  \end{example}

   \section{Relating the bounds $d^*$ and $d_{ORD}$}\label{se:reb}

As we noted above, in some cases the generalized order bounds may give different results than the original order bound, see \cite{Beelen} Example 8.  Likewise, also the  Andersen-Geil bound,  from which we have obtained $d^*$, can be very different from the original order bound, see Example 51 of \cite{AG}. In this Section we shall compare $d^*$ and the original order bound $d_{ORD}$.  This comparison can be done over sequences of one-point codes such that their duals are also one-point. We can slightly relax this condition by imposing that the duals are isometric to one-point codes.

    \subsection{The isometry-dual condition}

Let $C,D$, be two linear codes in ${\mathbb F}_q^n$ and let ${\mathbf x}\in ({\mathbb F}_q^*)^n$ be an $n$-tuple of non-zero elements. We say that $C$ and $D$ are {\em isometric} according to ${\mathbf x}$ (or simply {\em ${\mathbf x}$-isometric}) if the map $\chi_{\mathbf x}:{\mathbb F}_q^n \rightarrow {\mathbb F}_q^n$ given by $\chi_{\mathbf x} ({\mathbf v})={\mathbf x}*{\mathbf v}$ satisfies $\chi_{\mathbf x}(C)=D$.
Note that $\chi_{\mathbf x}$ is a true  linear isometry for the Hamming distance, hence isometric codes have the same parameters. The dual of a code $C$ is denoted by $C^{\perp}$.

  \begin{proposition}
Let $C,D$ be two linear codes in ${\mathbb F}_q^n$. If $\chi_{\mathbf x}(C)=D$ then $\chi_{\mathbf x}(D^{\perp})=C^{\perp}$.   
  \end{proposition}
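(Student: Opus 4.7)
The plan is to exploit the fact that, although $\chi_{\mathbf x}$ is in general \emph{not} an isometry with respect to the standard inner product $\langle\cdot,\cdot\rangle$, it is self-adjoint with respect to it. Concretely, for any $\mathbf u,\mathbf v\in{\mathbb F}_q^n$, a direct computation gives
\[
\langle \chi_{\mathbf x}(\mathbf u),\mathbf v\rangle \;=\; \sum_{i=1}^n x_i u_i v_i \;=\; \langle \mathbf u,\chi_{\mathbf x}(\mathbf v)\rangle.
\]
This identity is the only nontrivial ingredient; the rest is bookkeeping.

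First I would use self-adjointness to show the inclusion $\chi_{\mathbf x}(D^{\perp})\subseteq C^{\perp}$. Take $\mathbf u\in D^{\perp}$ and an arbitrary $\mathbf c\in C$. Since $\chi_{\mathbf x}(\mathbf c)\in D$ we have $0=\langle\mathbf u,\chi_{\mathbf x}(\mathbf c)\rangle=\langle\chi_{\mathbf x}(\mathbf u),\mathbf c\rangle$, so $\chi_{\mathbf x}(\mathbf u)\in C^{\perp}$, as desired.

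Next I would upgrade this inclusion to an equality by a dimension count. Because every $x_i\neq 0$, the map $\chi_{\mathbf x}$ is a linear bijection of ${\mathbb F}_q^n$, hence $\dim\chi_{\mathbf x}(D^{\perp})=\dim D^{\perp}=n-\dim D$. The hypothesis $\chi_{\mathbf x}(C)=D$ together with the bijectivity of $\chi_{\mathbf x}$ yields $\dim D=\dim C$, so $\dim\chi_{\mathbf x}(D^{\perp})=n-\dim C=\dim C^{\perp}$, and the inclusion is forced to be an equality.

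There is no real obstacle here; the only point where one could slip is in mixing up $\chi_{\mathbf x}$ with its inverse $\chi_{\mathbf x^{-1}}$ (where $\mathbf x^{-1}=(x_1^{-1},\dots,x_n^{-1})$). The statement as written is the correct one and follows from the self-adjointness identity above, without having to invoke $\chi_{\mathbf x^{-1}}$ at all.
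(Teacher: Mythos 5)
Your proof is correct and rests on exactly the same key identity as the paper's, namely ${\mathbf v}\cdot({\mathbf x}*{\mathbf c})=({\mathbf x}*{\mathbf v})\cdot{\mathbf c}$ (self-adjointness of $\chi_{\mathbf x}$). The only cosmetic difference is that the paper reads this identity as a biconditional (${\mathbf v}\in D^{\perp}$ iff $\chi_{\mathbf x}({\mathbf v})\in C^{\perp}$) to get equality directly, whereas you obtain one inclusion and close with a dimension count; both are fine.
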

  \begin{proof}
Let  ${\mathbf c}\in C$ and ${\mathbf d}=\chi_{\mathbf x}({\mathbf c})\in D$. For all ${\mathbf v}\in {\mathbb F}_q^n$ we have ${\mathbf v}\cdot{\mathbf d}={\mathbf v}\cdot({\mathbf x}*{\mathbf c})= ({\mathbf x}*{\mathbf v})\cdot{\mathbf c}$, hence ${\mathbf v}\in D^{\perp}$ if and only if $\chi_{\mathbf x}({\mathbf v})\in C^{\perp}$. 
  \end{proof}

Let us recall that we have fixed a basis ${\mathcal B} =\{ {\mathbf b}_1, \dots, {\mathbf b}_n\}$ of ${\mathbb F}_q^n$ and the associated codes  $C_i=\langle {\mathbf b}_1,\dots,{\mathbf b}_i \rangle$,  $i=0,\dots,n$.

  \begin{definition}
A sequence of codes $(C_i)_{i=0,\ldots,n}$ is said to satisfy the {\em isometry-dual} condition if there exists ${\mathbf x}\in ({\mathbb F}_q^*)^n$ such that $C_i$ is  ${\mathbf x}$-isometric to $C_{n-i}^{\perp}$ for all $i=0,1,\dots,n$.
  \end{definition}

Let us study  the case of AG codes. We consider the sequence of codes $(C(D,m_iQ))_{i=0,\ldots,n}$ arising from the curve $\mathcal X$  and the associated set $H^*=\{m_1,\dots,m_n\}$. In addition let $m_0=-1$ and $C(D,m_0Q)=(0)$. 
If $(C(D,m_iQ))$ satisfies the isometry-dual condition then both $d^*$ and the order bound $d_{ORD}$ can be used to estimate the minimum distance of these codes.
Let us remember that we are assuming that $n>2g+2$. Remember also that the dual of $C(D,mQ)$ is $C(D, D+W-mQ)$, where $W$ is a canonical divisor with simple poles and residue 1 at every point in supp$(D)$ (see \cite{Sti}). 

  \begin{proposition}
The following statements are equivalent. 
\begin{enumerate}
\item[\rm (a)] The sequence $(C(D,m_iQ))_{i=0,\dots,n}$ satisfies the isometry-dual condition.
\item[\rm (b)] The divisor $(n+2g-2)Q-D$ is canonical.
\item[\rm (c)] $n+2g-1\in H^*$.
\end{enumerate}
  \end{proposition}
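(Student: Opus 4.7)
My plan is to establish the three-way equivalence via the cycle $(b)\Leftrightarrow(c)$, $(b)\Rightarrow(a)$, and $(a)\Rightarrow(b)$. The first two are direct computations; the last is the delicate converse, which I expect to be the main obstacle.

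For $(b)\Leftrightarrow(c)$, I will use that $(n+2g-2)Q-D$ has degree exactly $2g-2$. By Riemann--Roch, a divisor of this degree is canonical iff its space of sections has dimension $g$, and has dimension $g-1$ otherwise. Plugging in $\ell((n+2g-1)Q)=n+g$, $\ell((n+2g-2)Q)=n+g-1$, and $\ell((n+2g-1)Q-D)=g$, the formula $\dim C(D,mQ)=\ell(mQ)-\ell(mQ-D)$ yields $\dim C(D,(n+2g-1)Q)=n$ and $\dim C(D,(n+2g-2)Q)=n+g-1-\ell((n+2g-2)Q-D)$. The dimension jumps---equivalently, $n+2g-1\in H^*$---precisely when $\ell((n+2g-2)Q-D)=g$, which is exactly (b).

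For $(b)\Rightarrow(a)$, I pick a non-zero differential $\omega$ with $(\omega)=(n+2g-2)Q-D$, which exists by (b). Since $v_{P_i}(\omega)=-1$ for every $i$, the residues $x_i:=\mathrm{res}_{P_i}(\omega)$ are all non-zero. For $f\in\mathcal{L}(mQ)$ and $h\in\mathcal{L}((n+2g-2-m)Q)$, the product $fh\omega$ has only simple poles and only at the $P_i$, so the residue theorem gives $\sum_i x_i f(P_i) h(P_i)=0$. This yields $\chi_{\mathbf{x}}(C(D,(n+2g-2-m)Q))\subseteq C(D,mQ)^\perp$, and a Riemann--Roch count of dimensions turns the inclusion into an equality. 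Specializing $m=m_i$, the code $C(D,(n+2g-2-m_i)Q)$ has dimension $n-i$ and therefore coincides with $C_{n-i}$, the unique code of this dimension in the chain. Thus $\chi_{\mathbf{x}}(C_{n-i})=C_i^\perp$ for all $i$, which is (a).

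For $(a)\Rightarrow(b)$, I start from $\mathbf{1}\in C_1\subseteq C_{n-1}$. Applying $\chi_{\mathbf{x}}$ and using $\chi_{\mathbf{x}}(C_{n-1})=C_1^\perp=\{\mathbf{v}:\sum v_i=0\}$ shows $\mathbf{x}\in C_1^\perp$, so $\sum x_i=0$; and since $\dim C_{n-1}^\perp=1$, we must have $C_{n-1}^\perp=\langle\mathbf{x}\rangle$. The converse of the residue theorem, applicable because $\sum x_i=0$, then produces a differential $\omega$ with $(\omega)\geq -D$ and $\mathrm{res}_{P_i}(\omega)=x_i$. Substituting $\omega$ into the standard AG code duality $C(D,m_iQ)^\perp=\chi_{\mathbf{x}}(C(D,(\omega)+D-m_iQ))$ and equating with $C(D,m_iQ)^\perp=\chi_{\mathbf{x}}(C(D,m_{n-i}Q))$ coming from (a) gives the family of code equalities $C(D,(\omega)+D-m_iQ)=C(D,m_{n-i}Q)$ for every $i$. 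The main obstacle will be to promote this family of code equalities into the divisor equivalence $(\omega)+D\sim(n+2g-2)Q$, which is (b). My plan is to exploit uniqueness: since $n>2g+2$ forces $m_{n-1}\geq 2g-1$, the space $\Omega(m_{n-1}Q)$ vanishes, making the differential with residues $\mathbf{x}$ and $(\omega)\geq m_{n-1}Q-D$ unique. The code equality at $i=n-1$ then forces the effective divisor $A:=(\omega)+D-m_{n-1}Q$ to have $\ell(A)=1$ (because $\ell(A-D)=0$ under our hypotheses); combining this with the remaining code equalities for smaller $i$, I aim to identify $A\sim(\deg A)\,Q$, which yields $(\omega)+D\sim(n+2g-2)Q$ and thus (b).
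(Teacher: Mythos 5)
Your $(b)\Leftrightarrow(c)$ is correct and is essentially the paper's argument: Riemann--Roch applied to the degree-$(2g-2)$ divisor $E=(n+2g-2)Q-D$, combined with Proposition~\ref{ell}. Your $(b)\Rightarrow(a)$ is also correct and close to the paper's: the paper writes $E+\mathrm{div}(f)=W$ and invokes the standard duality $C(D,mQ)^{\perp}=C(D,D+W-mQ)$, whereas you re-derive the pairing directly from the residue theorem; either way one gets $\chi_{\mathbf x}(C(D,m^{\perp}Q))=C(D,mQ)^{\perp}$ and then identifies $C(D,m_i^{\perp}Q)$ with $C_{n-i}$ by a dimension count. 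These parts are fine.

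The gap is in $(a)\Rightarrow(b)$, and it sits exactly where you flagged it. A minor point first: the differential supplied by the converse of the residue theorem only satisfies $(\omega)\ge -D$, so your $A$ is not a priori effective; to get $(\omega)\ge m_{n-1}Q-D$ you must use $\langle{\mathbf x}\rangle=C_{n-1}^{\perp}=\mathrm{res}_D\bigl(\Omega(m_{n-1}Q-D)\bigr)$, which is fixable but is not what you wrote. The decisive problem is the last step: an effective divisor $A$ with $\ell(A)=1$ and $0\le\deg A\le 2g-2$ is in general \emph{not} linearly equivalent to $(\deg A)Q$ --- a generic effective divisor of degree at most $g$ already has $\ell=1$ --- so effectivity plus $\ell(A)=1$ identifies nothing. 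The entire content of $(a)\Rightarrow(b)$ is the passage from the code identities $C(D,(\omega)+D-m_iQ)=C(D,m_{n-i}Q)$ to a divisor equivalence, and "combining the remaining code equalities" is not yet an argument. The paper closes this step by citing the Munuera--Pellikaan theorem on equality of geometric Goppa codes and equivalence of divisors: choosing a single $m$ with $2g\le m\le n-2$ (possible since $n>2g+2$), both $m$ and $m^{\perp}=n+2g-2-m$ lie in $H^{*}$, the divisors $D+W-mQ$ and $m^{\perp}Q$ have degrees in the admissible range, and the isometry of the corresponding codes forces $D+W-mQ\sim m^{\perp}Q$, i.e.\ $W\sim(n+2g-2)Q-D$. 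You need either to invoke that theorem or to prove the special case you use; as written, the implication $(a)\Rightarrow(b)$ is not established.
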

  \begin{proof}
 Let us consider the divisor $E=(n+2g-2)Q-D$ and for an integer $m$ write $m^{\perp}=n+2g-2-m$. 
((a)$\Leftrightarrow$(b)) Assume that the sequence $(C(D,m_iQ))_{i=0,\dots,n}$ satisfies the isometry-dual condition. Let  $m$ be such  that $2g\le m \le n-2$ (since $n>2g+2$, such an $m$ does exist). Then $2g\le m^{\perp}\le n-2$ and hence $m,m^{\perp}\in H^*$. In particular $\dim(C(D,mQ))+\dim(C(D,m^{\perp}Q))=n$.
Since the sequence $(C(D,m_iQ))_{i=0,\dots,n}$ satisfies the isometry-dual condition we have that $C(D,D+W-mQ)=C(D,mQ)^{\perp}$ is isometric to $C(D,m^{\perp}Q)$. This implies that the divisors $D+W-mQ$ and $m^{\perp}Q$ are equivalent (see \cite{MP}). Then $W\sim (m+m^{\perp})Q-D=E$ and this divisor is canonical. Conversely, if $E$ is a canonical divisor then there is a rational function $f$ such that $E+\mbox{div}(f)=W$. In particular $f$ has neither poles nor zeros in supp$(D)$. Let ${\mathbf x}=ev_D(f)$. Then we have $D+W-mQ=m^{\perp}Q+\mbox{div}(f)$ hence $C(D,mQ)^{\perp}={\mathbf x}*C(D,m^{\perp}Q)=\chi_{\mathbf x} (C(D,m^{\perp}Q)$. 
((b)$\Leftrightarrow$(c))  Since $\deg(E)=2g-2$, then $E$ is canonical if and only if $\ell (E)=g$. By the Riemann-Roch theorem (see \cite{HLP}, Theorem 2.55), we have $\ell (E+Q)=g$ hence $E$ is canonical if and only if $\ell(E)=\ell(E+Q)$, that is, if and only if $n+2g-1\in H^*$ according to Proposition \ref{ell}
  \end{proof}
  
  \begin{example} (Codes on Castle curves)
Let $\mathcal X$ be a Castle curve and $(C(D,m_iQ))_{i=0,\ldots,n}$ be a sequence of Castle codes of length $n$ arising from $\mathcal X$ (see Example \ref{castle}).  Since $D\sim nQ$ and the semigroup $H(Q)$ is symmetric, Proposition \ref{sim} implies that $(C(D,m_iQ))_{i=0,\ldots,n}$ satisfies the isometry-dual condition. 
  \end{example}  
  
  \begin{example} (Codes on the Klein quartic)
Let us consider the Klein quartic ${\mathcal X}$ of projective equation $X^3Y+Y^3Z+Z^3X=0$ and genus $g=3$. Over the field ${\mathbb F}_8$, ${\mathcal X}$ has 24 rational points (the maximum allowed by Weil-Serre bound) and a rich geometrical structure. Codes coming from this curve are usually constructed by using the divisors $G=m(Q_1+Q_2+Q_3)$, where $Q_1=(1:0:0),Q_2=(0:1:0)$ and $Q_3=(0:0:1)$, since this choice  has some technical advantages (see \cite{Du},\cite{Hansen},\cite{HLP}). However, one-point codes over ${\mathcal X}$ can also be considered.
Let $Q=Q_2$, $D'=Q_1+Q_3$, $D''=P_1+\dots+P_{21}$ be the sum of all rational points except $Q_1,Q_2,Q_3$ and let $D=D'+D''$. 
It is easy to see that div$(x)=3Q_3-2Q_2-Q_1$ and div$(y)=2Q_1+Q_3-3Q_2$. Then div$(xy)=Q_1+4Q_3-5Q_2$ and div$(x^2y)=7Q_3-7Q_2$.
Then the Weierstrass semigroup $H=H(Q)$ is generated by 3,5 and 7. In particular $\{ 1,y,xy,y^2,x^2y,\dots \}$
is a basis of ${\mathcal L}(\infty Q)$.  In order to compute $H^*=H^*(D,Q)$ we can proceed as in Example \ref{ej}.  By considering the morphism $\phi=y$, $\phi:{\mathcal X}\rightarrow {\mathbb P}^1(\overline{\mathbb F}_{8})$ of degree 3, we observe that $D''\sim 21 Q$. This fact leads us to consider the codes $C(D',mQ)$ of length 2 and the set $H^*(D',Q)$. 
Since $x^2y$ is the first non constant function in the above basis for which $Q_1$ is not a zero, we deduce that $H^*(D',Q)=\{ 0,7 \}$. Then $21+7=28=n+2g-1\in H^*(D,Q)$ and the sequence of codes $C(D,m_iQ)$ satisfies the isometry-dual condition. As we shall se in Lemma \ref{symmetry}, this condition provides the whole set $H^*$ and $H^*=\{0,3,5,6,7,\dots,22,23,25,28 \}$. A direct computation shows that for this sequence of codes, both $d^*$ and the order bound give the true minimum distance for all $m$.
  \end{example} 
  
  \begin{example}
Let us consider the sequence of codes of length $n=212$ introduced in Example \ref{ej}. Here $n+2g-1=309\not\in H^*$ hence this sequence does not satisfy the isometry-dual condition. As a consequence $d_{ORD}$ cannot be applied to estimate the minimum distances.  
  \end{example}

    \subsection{The bounds for isometry dual codes}

Let $(C(D,m_iQ))_{i=0,\dots,n}$ be a sequence of one-point codes satisfying the isometry-dual condition.
For this sequence the set $H^*$ is particularly simple and can be computed just in terms of the Weierstrass semigroup $H$. 

  \begin{lemma}\label{symmetry}
If $(C(D,m_iQ))$ satisfies the isometry-dual condition, then $H^*=\{ m\in H : n+2g-1-m\in H  \}$. 
  \end{lemma}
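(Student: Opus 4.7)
The plan is to reduce the characterization of $H^*$ to a purely numerical question about $H$ via Riemann-Roch, using the fact (established in the previous proposition) that the isometry-dual condition is equivalent to $W := (n+2g-2)Q - D$ being a canonical divisor.

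First, I will invoke Proposition \ref{ell} to rewrite the target equality as
\[
H^* = \{m \in H : \ell(mQ-D) = \ell((m-1)Q-D)\}.
\]
So the task becomes: for $m \in H$, show that $\ell(mQ-D) = \ell((m-1)Q-D)$ if and only if $n+2g-1-m \in H$.

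The key step is applying Riemann-Roch with the specific canonical divisor $W = (n+2g-2)Q - D$ provided by the isometry-dual condition. For any integer $m$, the ``complementary'' divisor is
\[
W - (mQ - D) = (n+2g-2-m)Q,
\]
so Riemann-Roch gives $\ell(mQ-D) = \ell((n+2g-2-m)Q) + (m - n - g + 1)$. Applying the same identity with $m$ replaced by $m-1$ and subtracting, I obtain
\[
\ell(mQ-D) - \ell((m-1)Q-D) = \ell((n+2g-2-m)Q) - \ell((n+2g-1-m)Q) + 1.
\]
Thus the left-hand side vanishes precisely when $\ell((n+2g-1-m)Q) = \ell((n+2g-2-m)Q) + 1$, which is by definition the statement that $n+2g-1-m$ is a non-gap of $H$, i.e.\ lies in $H$. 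Combined with the Proposition \ref{ell} reformulation, this yields $H^* = \{m \in H : n+2g-1-m \in H\}$.

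I do not expect any real obstacle here; the only delicate point is bookkeeping the degrees and ensuring that the canonical divisor supplied by the isometry-dual hypothesis is exactly $(n+2g-2)Q - D$ (so that the $D$'s cancel cleanly in $W - (mQ-D)$). Once that cancellation is observed, the statement falls out of a one-line Riemann-Roch calculation and the standard characterization of non-gaps as those $h$ with $\ell(hQ) > \ell((h-1)Q)$.
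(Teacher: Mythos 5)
Your proposal is correct and follows essentially the same route as the paper: both identify the canonical divisor $(n+2g-2)Q-D$ supplied by the isometry-dual condition, apply Riemann--Roch to get $\ell(mQ-D)=\ell((n+2g-2-m)Q)+m-n+1-g$, and compare the cases $m$ and $m-1$ to reduce membership in $H^*$ (via Proposition~\ref{ell}) to $n+2g-1-m$ being a non-gap. No further comment is needed.
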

  \begin{proof}
Let $m\in H$. From the Riemann-Roch theorem, $\ell (mQ-D)=m-n+1-g+\ell((n+2g-2-m)Q)$ and hence $\ell(mQ-D)=\ell((m-1)Q-D)$ if and only if $\ell((n+2g-2-m)Q)\neq \ell((n+2g-1-m)Q)$, that is, if and only if $n+2g-1-m\in H$.   
  \end{proof}

Thus for isometry-dual sequences the set $H^*$ is symmetric in the sense that for an integer $m$ it holds that $m\in H^*$ if and only if $n+2g-1-m\in H^*$  (and conversely this property implies the isometry-dual condition). It follows that $n+2g-1-m_i=m_{n-i+1}$. We must not confuse this kind of symmetry with the symmetry of the semigroup $H$.
Let us remember that a semigroup $H$ of genus $g$ is called {\em symmetric} if $2g-1\not\in H$ or equivalently (since its largest gap $l_g$  satisfies $l_g\le 2g-1$) if $l_g=2g-1$. For symmetric semigroups it holds that $m\in H$ if and only if $l_g-m\not\in H$, see \cite{HLP}. When the Weierstrass semigroup $H=H(Q)$ is symmetric, $(2g-2)Q$ is a canonical divisor, hence the isometry-dual property is equivalent to $D\sim nQ$. Since in this case the condition $n+2g-1-m\in H$ is equivalent to $m-n\not\in H$, or $m\not\in n+H$, then the set $H^*$ is given by
$$
H^*=H\setminus (n+H).
$$

Let us return to the general case of $H$, where it might not be symmetric. The symmetrical description of $H^*$ given by Lemma \ref{symmetry} allows us to write $H^*$  in the following way 

  \begin{proposition}
If the sequence $(C(D,m_iQ))$ satisfies the isometry-dual condition, then $H^*=\{ 0,\dots,n+2g-1  \}\setminus \{ l_1,\dots,l_g,n+2g-1-l_g,\dots,n+2g-1-l_1 \}$. 
  \end{proposition}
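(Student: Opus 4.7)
The plan is to combine the symmetric description of $H^*$ from Lemma \ref{symmetry} with a cardinality argument. By Lemma \ref{symmetry}, an integer $m\in\{0,\dots,n+2g-1\}$ fails to lie in $H^*$ precisely when either $m\notin H$ or $n+2g-1-m\notin H$. The first condition says $m$ is a gap, so $m\in\{l_1,\dots,l_g\}$, while the second says $n+2g-1-m$ is a gap, so $m\in\{n+2g-1-l_g,\dots,n+2g-1-l_1\}$. This shows the set-theoretic equality in one direction: the complement of $H^*$ in $\{0,\dots,n+2g-1\}$ is contained in the union $A:=\{l_1,\dots,l_g\}\cup\{n+2g-1-l_g,\dots,n+2g-1-l_1\}$.

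The next step is to verify the reverse inclusion by a counting argument. I would first check that the two sets forming $A$ are disjoint. Since $l_g\le 2g-1$ (the largest gap of a numerical semigroup of genus $g$) and we are under the running hypothesis $n>2g+2$, every gap satisfies $l_i\le 2g-1<n$, while every element of the shifted set satisfies $n+2g-1-l_i\ge n+2g-1-l_g\ge n$. Hence the two blocks do not overlap and $\#A=2g$.

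Finally, recall that $\#H^*=n$ (this is the number of codes in the refined chain, equal to the length), so the complement of $H^*$ in the $(n+2g)$-element set $\{0,\dots,n+2g-1\}$ has exactly $2g$ elements. Combined with the containment established in the first paragraph and the equality $\#A=2g$, this forces the complement of $H^*$ to equal $A$, which is the desired formula.

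The only subtle point, and the one I would flag as the main obstacle, is the disjointness of the two pieces of $A$: without the assumption $n>2g+2$ (equivalently $n>l_g+1$) a gap could coincide with a shifted gap and the cardinality count would collapse, so the hypothesis made at the start of Section~\ref{se:boop} is genuinely used here.
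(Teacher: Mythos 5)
Your proof is correct and follows essentially the same route as the paper's: Lemma \ref{symmetry} supplies one inclusion, the bound $l_g\le 2g-1<n$ gives disjointness of the two $g$-element blocks (so the excluded set has exactly $2g$ elements), and the count $\#H^*=n$ finishes by cardinality. The only cosmetic difference is that you prove the inclusion in the opposite direction from the paper (the complement of $H^*$ sits inside the excluded set, rather than the excluded set being disjoint from $H^*$), which is immaterial since Lemma \ref{symmetry} is an equivalence.
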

  \begin{proof} 
We have $l_1,\dots,l_g\not\in H^*$. In the same way, if $l$ is a gap of $H$ then
$n+2g-1-(n+2g-1-l)=l\not\in H$  and hence $n+2g-1-l\not \in H^*$. Furthermore, since $l_g<n$, then $l_g<n+2g-1-l_g$ and hence $\# \{ l_1,\dots,l_g,n+2g-1-l_g,\dots,n+2g-1-l_1 \}=2g$. By cardinality reasons we get the result.
  \end{proof}

For $i=1,\dots,n$, let us consider the set $L_i= \{ m_i+l_1,\dots, m_i+l_g \}$.

  \begin{proposition}\label{LambdaL}
If $(C(D,m_iQ))$ satisfies the isometry-dual condition, then $\# \Lambda^*_i=n-i+1-\# (L_i\cap H^*)$.
  \end{proposition}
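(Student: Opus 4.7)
The plan is to unpack the sets involved and do a direct count; the isometry-dual hypothesis does not appear to be essential for this particular identity, so I will first produce the purely combinatorial argument and then note where the hypothesis is implicitly serving.

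First I would rewrite $\Lambda^*_i$ using the identity $\Lambda^*_i = (m_i + H) \cap H^*$ proved in the discussion just before Theorem~\ref{asterisk}. Since the complement of $H$ in $\mathbb{N}_0$ is exactly the gap set $\{l_1,\ldots,l_g\}$, translation by $m_i$ gives
\begin{equation*}
m_i + H \;=\; \{m_i, m_i+1, m_i+2,\ldots\} \setminus L_i,
\end{equation*}
and therefore
\begin{equation*}
\Lambda^*_i \;=\; \bigl(H^* \cap \{m_i, m_i+1, \ldots\}\bigr) \setminus L_i.
\end{equation*}

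Second, I would use the enumeration $H^* = \{m_1 < m_2 < \cdots < m_n\}$ to identify $H^* \cap \{m_i, m_i+1, \ldots\}$ with $\{m_i, m_{i+1}, \ldots, m_n\}$, a set of cardinality $n-i+1$. Inclusion-exclusion then yields
\begin{equation*}
\#\Lambda^*_i \;=\; (n-i+1) - \#\bigl(\{m_i,\ldots,m_n\} \cap L_i\bigr).
\end{equation*}

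The only remaining step is to replace $\{m_i,\ldots,m_n\} \cap L_i$ by $L_i \cap H^*$; this is where one must be a bit careful because $L_i$ could in principle contain integers exceeding $m_n$, but those integers automatically fail to lie in $H^*$. Concretely, every gap satisfies $l_j \geq 1$, so $L_i \subseteq \{m_i+1, m_i+2,\ldots\}$; hence $L_i \cap H^* \subseteq H^* \cap (m_i,\infty) = \{m_{i+1},\ldots,m_n\} \subseteq \{m_i,\ldots,m_n\}$, while the reverse inclusion $L_i \cap \{m_i,\ldots,m_n\} \subseteq L_i \cap H^*$ is immediate from $\{m_i,\ldots,m_n\}\subseteq H^*$. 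Substituting gives the claimed identity.

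There is no real obstacle here; the whole argument is set-theoretic bookkeeping, and the isometry-dual hypothesis is not logically required for the equality itself. It appears because this proposition belongs to the section devoted to isometry-dual sequences, and the formula will presumably be combined with the symmetric description of $H^*$ from Lemma~\ref{symmetry} to draw consequences about $\#(L_i \cap H^*)$ in subsequent results.
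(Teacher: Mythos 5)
Your argument is correct, and it is genuinely different from the one in the paper. The paper's proof splits the indices $m_j\in H^*$ with $m_i+m_j\notin H^*$ into two sets $B_i^{(1)}$ (where $m_i+m_j<n+2g$) and $B_i^{(2)}$ (where $m_i+m_j\ge n+2g$), and counts each by invoking the symmetric description $H^*=\{0,\dots,n+2g-1\}\setminus\{l_1,\dots,l_g,\,n+2g-1-l_g,\dots,n+2g-1-l_1\}$ together with Lemma~\ref{symmetry}; both of those inputs rest on the isometry-dual hypothesis, so the paper's derivation is intrinsically tied to it. You instead work directly from $\Lambda^*_i=(m_i+H)\cap H^*$ (which the paper establishes unconditionally via Corollary~\ref{m+h}), write $m_i+H=\{m_i,m_i+1,\dots\}\setminus L_i$, intersect with $H^*\cap[m_i,\infty)=\{m_i,\dots,m_n\}$, and observe that $L_i\cap\{m_i,\dots,m_n\}=L_i\cap H^*$ because every element of $L_i$ exceeds $m_i$. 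All of these steps check out, and your conclusion that the identity $\#\Lambda^*_i=n-i+1-\#(L_i\cap H^*)$ holds for an arbitrary sequence $(C(D,m_iQ))$ --- with the isometry-dual condition playing no logical role --- is a genuine (if modest) strengthening of the statement as written. What the paper's longer route buys is that its intermediate objects ($B_i^{(1)}$, $B_i^{(2)}$, and the set $L$) are exactly the ones reused in the proof of Proposition~\ref{LambdaA} to match $\#\Lambda^*_r$ with $\#A[m_{n-r+1}]$, where the isometry-dual hypothesis really is essential; your proof is the cleaner and more general way to obtain this particular counting formula.
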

  \begin{proof}
Let $L=\{ l_1,\dots,l_g,n+2g-1-l_g,\dots,n+2g-1-l_1 \}$, and for $i=1,\dots,n$,
\begin{eqnarray*}
B_i^{(1)} &=& \{ m_j\in H^* : m_i+m_j< n+2g, m_i+m_j\not\in H^* \},   \\
B_i^{(2)} &=& \{ m_j\in H^* : m_i+m_j\ge n+2g \} . 
\end{eqnarray*} 
Clearly $\# \Lambda^*_i=\#(H^*\setminus (B_i^{(1)}\cup B_i^{(2)}))=n-\# B_i^{(1)}-\# B_i^{(2)}$. Since $H^*\subseteq H$ and the sum of two non-gaps is again a non-gap, we have $B_i^{(1)}=\{ m_j\in H^* : m_i+m_j \in L \}= \{ n+2g-1-l_g-m_i,\dots,n+2g-1-l_1-m_i \}\cap H^*$. According to Lemma \ref{symmetry},  $\# B_i^{(1)}= \# (L_i\cap H^*)$. Besides $\# B_i^{(2)}=i-1$. In fact, if  $m_i+m_j\ge n+2g$, from Lemma \ref{symmetry} we can write $m_j=n+2g-1-m_t$ with $t=n-j+1$. Then $n+2g-1+m_i-m_t>n+2g-1$ if and only if $m_i>m_t$ and there exist $i-1$ such choices for $m_t$.
  \end{proof}

Then $d^*$ can be written for isometry-dual codes  as
$$
d(C(D,m_iQ))\ge d^*(i)=\min \{ n-r+1-\# (L_r\cap H^*) : r\le i  \}.
$$
 
Let us prove now that $d^*$ and the strict order bound with respect to the evaluation map $ev_D$, $d_{ORD,ev}$ (\cite{HLP}, Section 4.3), give the same result when applied to codes satisfying the isometry-dual condition. Let $m_i\in H^*$ and let us compute both bounds for $C(D,m_iQ)$. If $m_i< n-l_g$, according to Proposition \ref{goppa} and Theorem 4.7 in \cite{HLP}, both bounds are equal to Goppa bound.

In order to compute the order bound, we first need the duals of the codes $C(D,m_rQ)$. As we know, $C(D,m_rQ)^{\perp}$ is isometric to $C(D,(n+2g-2-m_r)Q)$. Let $h_s,h_{s+1}\in H$ be such that $h_s\le n+2g-2-m_r<h_{s+1}$. Then $C(D,h_sQ)=C(D,(n+2g-2-m_r)Q)$ and hence $C(D,h_sQ)^{\perp}$ is isometric to $C(D,m_rQ)$. Note that $C(D,m_rQ)$ has dimension $r$, so  $C(D,h_sQ)$ has dimension $n-r$. Furthermore, Lemma \ref{symmetry} implies that $n+2g-1-m_r\in H^*$ hence $h_{s+1}=n+2g-1-m_r=m_{n-r+1}$ and $\dim C(D,h_{s+1}Q)=n-r+1$.

For $h\in H$ let us consider the set 
$$
A[h]=\{ t\in H : h-t\in H \}.
$$
The strict order bound on the minimum distance of $C(D,m_iQ)$   together with our previous discussion, imply that 
\begin{eqnarray*}
d(C(D,m_iQ))  \ge  d_{ORD,ev}(C(D,m_iQ)) &:=& \min \{ \# A[h] : h\in H^*, h\ge n+2g-1-m_i \} \\
          & =   & \min\{ \# A[n+2g-1-m_r] : m_r\in H^*, r\le i   \} \\
          & =   & \min\{ \# A[m_{n-r+1}] : r\le i   \},
\end{eqnarray*}
where the last two equalities follow from \ref{symmetry} and the fact that $m_{n-r+1}=n+2g-1-m_r$.

  \begin{lemma}\label{lagunas}
If $h\in H$ and $l\not\in H$ then $l-h\not\in H$. 
  \end{lemma}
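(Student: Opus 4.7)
The plan is to prove the lemma by contrapositive, exploiting the fact that $H$ is a numerical semigroup and hence closed under addition. Specifically, I would assume for contradiction that $l - h \in H$ (note that $l - h$ is at least an integer; since $h \in H$ and $l \notin H$ one must separately verify that $l > h$, which follows because otherwise $l - h$ would be a negative integer and trivially not in $H \subseteq \mathbb{N}_0$, making the conclusion automatic).

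Assuming $l - h \in H$, I would then apply the semigroup property of $H$: since both $h \in H$ and $l - h \in H$, their sum $h + (l - h) = l$ also lies in $H$. This contradicts the hypothesis $l \notin H$, so $l - h \notin H$.

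There is essentially no obstacle here — the statement is just a one-line consequence of $H$ being closed under addition, presented as a lemma because it will be invoked repeatedly in the subsequent comparison between $d^*$ and $d_{\mathrm{ORD},ev}$ (likely to argue that certain elements of $A[m_{n-r+1}]$ correspond bijectively with elements of $\Lambda^*_r$). The only mild care needed is the case distinction on whether $l \geq h$ or $l < h$, but the latter case gives $l - h < 0 \notin H$ immediately by the convention that $H \subseteq \mathbb{N}_0$.
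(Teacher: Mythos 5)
Your proof is correct and is essentially identical to the paper's: both argue that if $l-h=h'\in H$ then $l=h+h'\in H$ by closure under addition, contradicting $l\notin H$. The extra remark about the case $l<h$ being vacuous is fine but not needed.
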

  \begin{proof}
If $l-h=h'\in H$ then $l=h+h'$ and hence $l\in H$. 
  \end{proof}

  \begin{proposition}\label{LambdaA}
Let $m_r\in H^*$. If $(C(D,m_iQ))$ satisfies the isometry-dual condition, then $\# \Lambda^*_r=\# A[m_{n-r+1}]$.
  \end{proposition}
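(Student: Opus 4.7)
The plan is to establish a bijection between $\Lambda^*_r$ and $A[m_{n-r+1}]$ given by the involution $m \mapsto n+2g-1-m$. Under the isometry-dual condition, Lemma \ref{symmetry} tells us that this involution preserves $H^*$, and directly gives $m_{n-r+1}=n+2g-1-m_r$; these are exactly the two ingredients needed.

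First I would define $\phi\colon \Lambda^*_r \to A[m_{n-r+1}]$ by $\phi(m)=n+2g-1-m$, and check well-definedness as follows. If $m\in \Lambda^*_r$, then $m\in H^*$, so by Lemma \ref{symmetry} we have $\phi(m)=n+2g-1-m\in H$. Moreover,
$$
m_{n-r+1}-\phi(m) = (n+2g-1-m_r) - (n+2g-1-m) = m-m_r,
$$
which lies in $H$ because $m\in \Lambda^*_r$. Hence $\phi(m)\in A[m_{n-r+1}]$.

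Next I would define the candidate inverse $\psi\colon A[m_{n-r+1}]\to \Lambda^*_r$ by $\psi(t)=n+2g-1-t$, and verify its image. Given $t\in A[m_{n-r+1}]$, both $t$ and $m_{n-r+1}-t$ lie in $H$, so
$$
\psi(t) = n+2g-1-t = m_r + (m_{n-r+1}-t)
$$
is a sum of two non-gaps, hence in $H$; together with $n+2g-1-\psi(t)=t\in H$, Lemma \ref{symmetry} forces $\psi(t)\in H^*$. Finally $\psi(t)-m_r = m_{n-r+1}-t\in H$, so $\psi(t)\in\Lambda^*_r$. Since $\phi$ and $\psi$ are mutually inverse by construction, they are bijections and the cardinalities agree.

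There is no real obstacle here: the content is entirely encoded in Lemma \ref{symmetry} and the identity $m_{n-r+1}=n+2g-1-m_r$, and the only thing to be careful about is that the two defining conditions of each set map correctly to the two defining conditions of the other under the involution. A shorter writeup can simply display the chain of equivalences
$$
m\in \Lambda^*_r \iff m\in H^*\ \text{and}\ m-m_r\in H \iff n+2g-1-m\in H\ \text{and}\ m_{n-r+1}-(n+2g-1-m)\in H,
$$
which is precisely the condition $n+2g-1-m\in A[m_{n-r+1}]$.
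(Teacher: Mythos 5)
Your proof is correct, but it follows a genuinely different route from the paper's. You exhibit an explicit bijection: the involution $m\mapsto n+2g-1-m$ carries $\Lambda^*_r=\{m\in H^*: m-m_r\in H\}$ onto $A[m_{n-r+1}]$, and the two ingredients you invoke — Lemma~\ref{symmetry} and the identity $m_{n-r+1}=n+2g-1-m_r$ — are exactly what is needed; each verification (that $\phi(m)\in H$ via Lemma~\ref{symmetry}, that $m_{n-r+1}-\phi(m)=m-m_r\in H$, and conversely that $\psi(t)=m_r+(m_{n-r+1}-t)$ is a sum of non-gaps hence in $H$, so that Lemma~\ref{symmetry} puts it in $H^*$) checks out. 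The paper instead argues by counting: it shows $\# A[n+2g-1-m_r]+\#(L_r\cap H^*)=n-r+1$, where $L_r=\{m_r+l_1,\dots,m_r+l_g\}$, and combines this with Proposition~\ref{LambdaL} (which gives $\#\Lambda^*_r=n-r+1-\#(L_r\cap H^*)$); that route needs Lemma~\ref{lagunas} and an inclusion--exclusion over gaps, but it also produces the formula of Proposition~\ref{LambdaL}, which is independently useful for computing $d^*$ in the isometry-dual case. Your argument is self-contained, bypasses Proposition~\ref{LambdaL} and Lemma~\ref{lagunas} entirely, and is arguably more informative, since it identifies the two sets element-by-element rather than merely matching cardinalities.
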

  \begin{proof}
Let us compute $\# A[n+2g-1-m_r]+\# (L_r\cap H^*)$. For a given gap $l$ of $H$, we  have $m_r+l\in H^*$ if and only if $n+2g-1-m_r-l\in H^*$. Thus 
\begin{eqnarray*}
\# (L_r\cap H^*)&=& \# \{ l\in \mbox{Gaps}(H) : n+2g-1-m_r-l\in H^*  \}\\
                &=& \# \{ h\in H^* : n+2g-1-m_r-h\in \mbox{Gaps}(H)  \},
\end{eqnarray*}
so $\# A[n+2g-1-m_r]+\# (L_r\cap H^*)= \# \{ h\in H : h\le n+2g-1-m_r \}-\# \{ h\in H\setminus H^* : h\le n+2g-1-m_r,  n+2g-1-m_r-h \in \mbox{Gaps}(H) \}$. Let us note that for all $h\in H\setminus H^*$, $h\le n+2g-1-m_r$, it holds that  $n+2g-1-h\in \mbox{Gaps}(H)$. In fact, according to Lemma \ref{symmetry}, we would otherwise have $h\in H^*$. Then, from Lemma \ref{lagunas}, $n+2g-1-m_r-h \in \mbox{Gaps}(H)$. So $\{ h\in H\setminus H^* : h\le n+2g-1-m_r, n+2g-1-m_r-h \in \mbox{Gaps}(H) \}=\{ h\in H\setminus H^* : h\le n+2g-1-m_r \}$ and hence $\# A[n+2g-1-m_r]+\# (L_r\cap H^*)= \# \{ h\in H^* : h\le n+2g-1-m_r \}=\dim (C(D,(n+2g-1-m_r)Q))=n-r+1$. 
  \end{proof}

  \begin{corollary}
For isometry-dual codes, we have $d_{ORD,ev}(C(D,m_iQ))=d^*(i)$.
  \end{corollary}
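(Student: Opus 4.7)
The plan is essentially a one-line deduction, so I would keep the proof very short and focus on pointing to the right ingredients.

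First I would recall the two formulas that have just been established. On the one hand, from the definition of $d^*$ together with Proposition \ref{LambdaL}, we have
$$
d^*(i) = \min\{\#\Lambda^*_r : r \le i\}.
$$
On the other hand, the discussion immediately preceding Proposition \ref{LambdaA} shows that under the isometry-dual assumption,
$$
d_{ORD,ev}(C(D,m_iQ)) = \min\{\#A[m_{n-r+1}] : r \le i\},
$$
where the rewriting uses Lemma \ref{symmetry} to identify $h_{s+1} = n+2g-1-m_r = m_{n-r+1}$ for $r \le i$.

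Then I would simply invoke Proposition \ref{LambdaA}, which asserts the termwise equality $\#\Lambda^*_r = \#A[m_{n-r+1}]$ for every $r$ with $m_r \in H^*$ (equivalently for every $r \in \{1,\dots,n\}$, since the sequence $(m_r)$ enumerates $H^*$). Taking the minimum over $r \le i$ on both sides yields $d^*(i) = d_{ORD,ev}(C(D,m_iQ))$, which is the claim.

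There is essentially no obstacle: all of the content has been absorbed into Proposition \ref{LambdaA}, whose proof handled the combinatorial identity between the sets $\Lambda^*_r$ and $A[m_{n-r+1}]$ via the symmetry of $H^*$ (Lemma \ref{symmetry}) and Lemma \ref{lagunas}. The only thing to be careful about is to confirm that the range of indices $r \le i$ used in both minima matches; this is immediate because $d^*$ ranges over $r \le i$ by definition, and in the order-bound formula the constraint $h \in H^*$ with $h \ge n+2g-1-m_i$ translates, via $h = m_{n-r+1}$, precisely to $r \le i$.
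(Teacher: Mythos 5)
Your proof is correct and matches the paper's intent exactly: the paper states the corollary without proof precisely because it is the immediate combination of the displayed formula $d_{ORD,ev}(C(D,m_iQ))=\min\{\#A[m_{n-r+1}]: r\le i\}$, the definition $d^*(i)=\min\{\#\Lambda^*_r: r\le i\}$, and the termwise identity of Proposition \ref{LambdaA}. Your added check that the index ranges in the two minima coincide is a sensible (if routine) verification.
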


Therefore $d^*$ and the strict order bound are the same for isometry-dual codes.

\subsection{More on improved codes}

In Section \ref{Sect_improved} we have considered the improved code $C(D,Q,\delta)=\langle \{ ev(f_i) : \# \Lambda^*_i\ge \delta \}\rangle$, for $1\le \delta\le n$. It is analogous to the improved code $\tilde{C}(D,Q,\delta)$ introduced by Feng and Rao, \cite{FR,HLP}, based on the order bound:  
$$
\tilde{C}(D,Q,\delta)=\langle \{ ev (f_i) : \# A[m_i]<\delta \}\rangle^{\perp}.
$$  
It is well known that the minimum distance of $\tilde{C}(D,Q,\delta)$ is at least $\delta$. When the sequence $(C(D,m_i Q))$ is isometry-dual, Proposition \ref{LambdaA} allows us to write $\tilde{C}(D,Q,\delta)$ in terms of the sets $\Lambda^*_i$'s,
$$
\tilde{C}(D,Q,\delta)=\langle \{ ev (f_i) : \# \Lambda^*_{n+1-i} \ge \delta \}\rangle^{\perp}.
$$
Then it is natural to wonder about the relation between these two improved codes $\tilde{C}(D,Q,\delta)$ and $C(D,Q,\delta)$. 

  \begin{proposition}
If the sequence $(C(D,m_iQ))$ satisfies the isometry-dual condition, then  $C(D,Q,\delta)$ and $\tilde{C}(D,Q,\delta)$ have the same dimension. 
  \end{proposition}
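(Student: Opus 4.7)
The plan is to compute both dimensions explicitly, using that the family $\{ev(f_1),\ldots,ev(f_n)\}$ is a basis of $\mathbb{F}_q^n$, and then apply Proposition \ref{LambdaA} to identify them.

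First I would observe that since $\{ev(f_1),\ldots,ev(f_n)\}$ is a basis of $\mathbb{F}_q^n$, any subfamily is linearly independent. Therefore
\[
\dim C(D,Q,\delta) \;=\; \#\{i : \#\Lambda^*_i \geq \delta\},
\]
directly from the definition of $C(D,Q,\delta)$ given in Section \ref{Sect_improved}. Similarly, the inner span defining $\tilde{C}(D,Q,\delta)$ has dimension $\#\{i : \#A[m_i] < \delta\}$, so
\[
\dim \tilde{C}(D,Q,\delta) \;=\; n - \#\{i : \#A[m_i] < \delta\}.
\]

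Next I would invoke Proposition \ref{LambdaA}, which under the isometry-dual hypothesis gives $\#\Lambda^*_r = \#A[m_{n-r+1}]$ for every $r$. Setting $r=n+1-i$, this rewrites as $\#A[m_i] = \#\Lambda^*_{n+1-i}$. Substituting in the expression for $\dim\tilde{C}(D,Q,\delta)$ and re-indexing by $j=n+1-i$ (a bijection on $\{1,\ldots,n\}$), I obtain
\[
\dim \tilde{C}(D,Q,\delta) \;=\; n - \#\{i : \#\Lambda^*_{n+1-i} < \delta\} \;=\; n - \#\{j : \#\Lambda^*_j < \delta\} \;=\; \#\{j : \#\Lambda^*_j \geq \delta\},
\]
which equals $\dim C(D,Q,\delta)$ by the first step.

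There is really no obstacle here: once Proposition \ref{LambdaA} is available, the result is purely a bookkeeping identity matching the index set that defines $C(D,Q,\delta)$ with the complement of the index set that defines the generators of $\tilde{C}(D,Q,\delta)^{\perp}$. The only mildly subtle point to emphasize is the linear independence of the generating families, which ensures that the cardinality counts actually give dimensions rather than upper bounds.
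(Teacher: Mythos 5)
Your proof is correct and is essentially the paper's own argument, just written out in full: the paper compresses the same counting (a subfamily of the basis $\{ev(f_1),\dots,ev(f_n)\}$ spans $C(D,Q,\delta)$ with $t$ generators, while Proposition \ref{LambdaA} shows $\tilde{C}(D,Q,\delta)$ is cut out by $n-t$ independent parity checks) into a single sentence. No substantive difference.
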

  \begin{proof}
If $C(D,Q,\delta)$ is generated by $t$ vectors then $\tilde{C}(D,Q,\delta)$ is defined by $n-t$ independent parity checks.  
  \end{proof}
  
If the sequence $(\Lambda^*_i)$ is monotone for $\delta$ then $C(D,Q,\delta)$ is a one-point code, hence $C(D,Q,\delta)$ and $\tilde{C}(D,Q,\delta)$ are isometric. Let us study the general case.

  \begin{lemma}\label{lem1}
Let $(C_i=\langle {\mathbf b}_1, \dots , {\mathbf b}_i\rangle)$ be a sequence  of codes that satisfies the isometry-dual condition, $\chi_{\mathbf x}(C_i)=C_{n-i}^{\perp}$. Then for $i=1,2,\dots,n$, we have
$$
\chi_{\mathbf x}({\mathbf b}_i)\in C_{n-i}^{\perp}\setminus C_{n-i+1}^{\perp}.
$$
  \end{lemma}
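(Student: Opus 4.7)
The plan is a short chase through the isometry. The map $\chi_{\mathbf x}$ is a linear bijection of $\mathbb{F}_q^n$ (all components of $\mathbf{x}$ are nonzero units), so it sends any strict inclusion of subspaces to a strict inclusion and preserves set-theoretic differences. The proof is then essentially a single sentence once the right vectors are chased through.

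First I would record that $\mathbf{b}_i \in C_i \setminus C_{i-1}$: this is immediate from the definition $C_i = \langle \mathbf{b}_1, \ldots, \mathbf{b}_i\rangle$ together with the fact that $\mathcal{B}$ is a basis of $\mathbb{F}_q^n$ (so $\mathbf{b}_i$ is linearly independent from $\mathbf{b}_1, \ldots, \mathbf{b}_{i-1}$). Next, applying the bijection $\chi_{\mathbf x}$ and using that it is linear, I would get
\[
\chi_{\mathbf x}(\mathbf{b}_i) \in \chi_{\mathbf x}(C_i) \setminus \chi_{\mathbf x}(C_{i-1}).
\]
Finally, invoking the isometry-dual hypothesis twice (once for index $i$ and once for index $i-1$) gives $\chi_{\mathbf x}(C_i) = C_{n-i}^{\perp}$ and $\chi_{\mathbf x}(C_{i-1}) = C_{n-i+1}^{\perp}$, which yields the claim.

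There is no real obstacle here; the only point that needs a tiny bit of care is justifying that $\chi_{\mathbf x}$ is a bijection on subspaces (not only on vectors), which follows at once from $\mathbf{x} \in (\mathbb{F}_q^{*})^n$ since then $\chi_{\mathbf x}^{-1} = \chi_{\mathbf y}$ with $\mathbf{y} = (x_1^{-1}, \ldots, x_n^{-1})$. Hence $\chi_{\mathbf x}$ preserves dimensions and strict inclusions, and the three-line argument above completes the proof.
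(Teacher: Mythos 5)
Your proof is correct and is exactly the direct argument the paper intends (its proof reads only ``follows directly from the definition of isometry-dual sequence''); you simply spell out the chase $\mathbf{b}_i\in C_i\setminus C_{i-1}$, apply the bijection $\chi_{\mathbf x}$, and use the hypothesis at indices $i$ and $i-1$.
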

  \begin{proof}
Follows directly from the definition of isometry-dual sequence.  
  \end{proof}
 
Let us remember that the improved codes $C(D,Q,\delta)$ and $\tilde{C}(D,Q,\delta)$ depend on the choice of functions $f_1,\dots,f_n$ in ${\mathcal{L}}(\infty Q)$ such that $v(f_i)=m_i$.

  \begin{lemma}\label{lem2}
If $( C(D,m_iQ))$ satisfies the isometry-dual condition then given a set $\{f_1, \ldots , f_n\}$ of functions in
${\mathcal{L}}(\infty Q)$ with $v(f_i)=m_i$,  there exists a similar set $\{f_1^\prime, \ldots , f_n^\prime\}$ such that $\chi_{\mathbf x}(ev(f_i^\prime)) \cdot ev(f_j) \neq 0$ holds if and only if $j=n-i+1$.
  \end{lemma}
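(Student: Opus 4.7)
The plan is to construct each $f_i'$ by appending a lower-order tail to the given $f_i$: set
\[
f_i' = f_i + \sum_{k=1}^{i-1} a_k^{(i)} f_k
\]
with coefficients $a_k^{(i)} \in \mathbb{F}_q$ to be determined. Since $v(f_k) = m_k < m_i$ for $k < i$, the leading term is unchanged and $v(f_i') = m_i$, so $\{f_1', \ldots, f_n'\}$ is a valid ``similar'' set. Writing $M_{kj} := \chi_{\mathbf x}(ev(f_k)) \cdot ev(f_j)$ and $M'_{ij}$ for the analogous quantity with $f_i$ replaced by $f_i'$, the goal becomes to choose the $a_k^{(i)}$'s so that $M'_{ij}$ is nonzero exactly on the anti-diagonal $j = n-i+1$.

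The first observation is that the entries at and below the anti-diagonal are automatically correct, regardless of the $a_k^{(i)}$'s. Lemma \ref{lem1} gives $\chi_{\mathbf x}(ev(f_k)) \in C_{n-k}^{\perp}$ for every $k$; when $k \leq i-1$ and $j \leq n-i+1$, we have $j \leq n-k$, so $ev(f_j) \in C_{n-k}$ and $M_{kj} = 0$. Hence $M'_{ij} = M_{ij}$ for such $j$, and by the same lemma $M_{ij}$ vanishes for $j < n-i+1$ and is nonzero for $j = n-i+1$. So no constraint is placed on the $a_k^{(i)}$'s by the ``if'' half of the desired equivalence.

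The remaining $i-1$ conditions $M'_{ij} = 0$ for $j = n-i+2, \ldots, n$ form a square linear system
\[
\sum_{k=1}^{i-1} a_k^{(i)} M_{kj} = -M_{ij}
\]
in the unknowns $a_k^{(i)}$. The crux of the argument, and the only step beyond a direct application of Lemma \ref{lem1}, is to show that its coefficient matrix $(M_{kj})$, with $k = 1, \ldots, i-1$ and $j = n-i+2, \ldots, n$, is invertible. This too follows from Lemma \ref{lem1}: the entries satisfy $M_{kj} = 0$ whenever $j < n-k+1$, so the block is anti-triangular, while its anti-diagonal entries $M_{k, n-k+1}$ are all nonzero. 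Hence the system admits a unique solution, which defines $f_i'$. Since the construction is independent across $i$, no sequential bookkeeping is required, and performing it for each $i = 1, \ldots, n$ produces the required set.
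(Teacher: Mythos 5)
Your proof is correct and follows essentially the same route as the paper: both use Lemma \ref{lem1} to see that the entries with $j\le n-i+1$ are automatically as required, and then adjust $f_i$ by lower-order terms to kill the entries with $j\ge n-i+2$. The only difference is organizational --- the paper corrects $f_{s+1}$ inductively against the already-modified $f_1',\dots,f_s'$, so each coefficient is obtained by a single division, whereas you solve a triangular linear system in the original $f_k$'s; both rest on the same anti-triangular/nonzero-anti-diagonal structure coming from Lemma \ref{lem1}.
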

  \begin{proof}
By Lemma~\ref{lem1} and the isometry-dual condition, the sets $\{f_1, \ldots ,f_n\}$ and  $\{f_1^\prime, \ldots , f_n^\prime\}$ will satisfy
$$
\begin{array}{ll}
(\chi_{\mathbf x}(ev(f_i^\prime)))\cdot ev(f_j)=0& {\mbox{for \ }}j=1,\ldots ,n-i\\
(\chi_{\mathbf x}(ev(f_i^\prime)))\cdot ev(f_{n-i+1})\neq 0.
\end{array}
$$
So, we have to determine a particular set $\{f_1^\prime, \ldots,f_n^\prime\}$ that in addition satisfies
\begin{eqnarray}
(\chi_{\mathbf x}(ev(f_i^\prime)))\cdot ev(f_j)=0& {\mbox{ \ for \ }} &
j=n-i+2, \ldots , n.\label{eqbox}
\end{eqnarray}
We show the existence of such a set by induction. Note first that given arbitrary $\{f_1, \ldots , f_n\}$ then the condition~(\ref{eqbox}) is trivially satisfied for $i=1$ if we choose $f_1^\prime=f_1$. Assume next that (\ref{eqbox}) holds for all values of $i=1, \ldots , s$, where $s$ is some number less than $n$. That is, for each $i \in \{
1, \ldots , s\}$ the only $j$ such that $\chi_{\mathbf x}(ev(f_i^{\prime}))\cdot ev(f_j)\neq 0$ is $j=n-i+1$. Denote by $a_j$ the value of $\chi_{\mathbf x}(ev(f_i^{\prime}))\cdot ev(f_j)$, $j=n-s+1, \ldots ,n$. The function
$$
f_{s+1}^{\prime}=f_{s+1}-\sum_{i=1}^s
  \frac{a_i}{\chi_{\mathbf x}(ev(f_i^{\prime}))\cdot ev(f_{n+1-i})}f_i^{\prime}
$$
satisfies (\ref{eqbox}) as 
\begin{eqnarray*}
\lefteqn{\chi_{\mathbf x}(ev(f_{s+1}^{\prime}))\cdot ev(f_j)=}\\
&& \chi_{\mathbf x}(ev(f_{s+1})) \cdot ev(f_j)-\bigg(\sum_{i=1}^s
   \frac{a_i}{\chi_{\mathbf x}(ev(f_i^{\prime}))\cdot
   ev(f_{n-i+1})}\chi_{\mathbf x}(ev(f_i^{\prime})) \cdot ev(f_j) \bigg).
\end{eqnarray*}
\end{proof}

  \begin{proposition}
Assume $\big( C(D, m_1 Q)\big)$ satisfies the isometry-dual condition. For every choice of $\{f_1, \ldots , f_n\}$ of functions in ${\mathcal{L}}(\infty Q)$ with $v(f_i)=m_i$, there exists a similar set $\{f_1^\prime, \ldots  f_n^\prime\}$ such that the code $\tilde{C}(D,Q,\delta)$ defined from the first set is
isometric to the code $C(D,Q,\delta)$ defined from the latter set. A similar result holds the other way around.
  \end{proposition}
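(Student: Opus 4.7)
The plan is to verify that $\chi_{\mathbf{x}}$ identifies the two improved codes once we choose a suitable $\{f_i'\}$. First, I would translate $\tilde{C}(D,Q,\delta)$ into the language of the $\Lambda^*$'s via Proposition \ref{LambdaA}: setting $I = \{i : \# \Lambda^*_i \geq \delta\}$ and $J = \{n+1-i : i \in I\}$, we have
\[
C(D,Q,\delta) \text{ built from } \{f_i'\} = \langle \{ev(f_i') : i \in I\} \rangle, \qquad \tilde{C}(D,Q,\delta) \text{ built from } \{f_j\} = \langle \{ev(f_j) : j \notin J\} \rangle^{\perp},
\]
where I use $\# A[m_j] = \# \Lambda^*_{n+1-j}$ to rewrite the generating set of the dual. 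Note $|J| = |I|$ since $i \mapsto n+1-i$ is a bijection.

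Next, apply Lemma \ref{lem2} to the given set $\{f_1,\dots,f_n\}$ to produce $\{f_1',\dots,f_n'\}$ satisfying $\chi_{\mathbf{x}}(ev(f_i')) \cdot ev(f_j) = 0$ for every $j \neq n+1-i$. For each $i \in I$, the unique index at which the inner product is nonzero, namely $n+1-i$, lies in $J$; consequently, for every $j \notin J$ we have $\chi_{\mathbf{x}}(ev(f_i')) \cdot ev(f_j) = 0$, whence $\chi_{\mathbf{x}}(ev(f_i')) \in \tilde{C}(D,Q,\delta)$. By linearity this gives the inclusion $\chi_{\mathbf{x}}(C(D,Q,\delta)) \subseteq \tilde{C}(D,Q,\delta)$. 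The previous proposition in this subsection shows that the two codes have the same dimension $|I|$, so the inclusion is an equality and the two codes are isometric via $\chi_{\mathbf{x}}$.

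For the converse direction — given $\{f_i'\}$, construct $\{f_i\}$ with the analogous property — the same inductive Gram--Schmidt-style procedure of Lemma \ref{lem2} applies with the roles of $\{f_i\}$ and $\{f_i'\}$ swapped (using $\chi_{\mathbf{x}^{-1}}$ and the isometry-dual condition in the form $\chi_{\mathbf{x}^{-1}}(C_{n-i}^{\perp}) = C_i$). The main conceptual point, and the only potential obstacle, is keeping the index correspondence $i \leftrightarrow n+1-i$ straight: this is precisely the correspondence supplied by the isometry-dual symmetry via Proposition \ref{LambdaA}, after which the orthogonality engineered in Lemma \ref{lem2} makes the inclusion immediate and the dimension equality upgrades it to an isometry.
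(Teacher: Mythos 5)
Your proof is correct and follows essentially the same route as the paper's: translate $\tilde{C}(D,Q,\delta)$ via the identity $\#A[m_j]=\#\Lambda^*_{n+1-j}$ from Proposition \ref{LambdaA}, invoke Lemma \ref{lem2} to arrange the orthogonality, and conclude. The paper compresses the final step into ``applying the definitions,'' and your explicit inclusion-plus-dimension-count (using the preceding proposition on equal dimensions) is exactly the intended filling-in of that step.
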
 

  \begin{proof} 
By Propositions \ref{LambdaL} and \ref{LambdaA} we have  $\# \Lambda_i^\ast=\#A[m-{n+1-i}]$. Choosing $\{f_1^\prime, \ldots ,f_n^\prime\}$ such that Lemma~\ref{lem2} is satisfied and applying the definitions of $C(D,Q,\delta)$ and $\tilde{C}(D,Q,\delta)$ proves the first claim. The last claim follows by symmetry. 
  \end{proof}

\section{Generalized Hamming weights} \label{se:ghw}

The same ideas used to obtain the bound $d^*$ for the minimum distance can be applied to all generalized Hamming weights (see \cite{AG}). Let us remember that given a set $D \subseteq {\mathbb F}_q^n$, the {\em support} of $D$ is defined as 
$$
\mbox{supp}(D)=\bigcup_{{\mathbf v}\in D} \mbox{supp}({\mathbf v}).
$$
Let $C$ be a code of dimension $k$. For $r=1,\dots,k$, the $r$-th {\em generalized Hamming weight} of $C$ is defined as
$$
d_r(C)=\min \{\# \mbox{supp}(D) : D \mbox{ is an $r$-dimensional linear subspace of $C$} \},
$$
and the sequence $d_1(C),\dots,d_k(C)$, is called the {\em weight hierarchy} of $C$.
Let us first look a general bound on the $d_r(C)$'s. Recall that we have a basis  ${\mathcal B} =\{ {\mathbf b}_1, \dots, {\mathbf b}_n\}$ of ${\mathbb F}_q^n$ and codes $C_i=\langle {\mathbf b}_1,\dots,{\mathbf b}_i \rangle$. 

  \begin{lemma}\label{dr}
Let $D \subseteq {\mathbb F}_q^n$ be a linear subspace of dimension $r$ and let $\{ {\mathbf c}_1,\dots,{\mathbf c}_r \}$ be a basis of $D$. Then $\#\mbox{supp}(D)\ge\#  \cup_{i=1,\dots,r}\{ \nu({\mathbf b}_{\nu({\mathbf c}_i)}*{\mathbf b}_j) : j\in \Lambda_{\nu({\mathbf c}_i)} \}$.
  \end{lemma}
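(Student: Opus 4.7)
The plan is to mimic the chain of inequalities used in the proof of Proposition~\ref{agbound}, but now applied simultaneously to all basis vectors ${\mathbf c}_1,\dots,{\mathbf c}_r$ of $D$.

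First I would introduce the ambient space $V(D)=\{{\mathbf v}\in{\mathbb F}_q^n:\mathrm{supp}({\mathbf v})\subseteq\mathrm{supp}(D)\}$, whose dimension equals $\#\mathrm{supp}(D)$ by the same reasoning used for $V({\mathbf c})$ before Proposition~\ref{agbound}. Since each ${\mathbf c}_i\in D$, we have $\mathrm{supp}({\mathbf c}_i\ast{\mathbf b}_j)\subseteq\mathrm{supp}({\mathbf c}_i)\subseteq\mathrm{supp}(D)$, so every product ${\mathbf c}_i\ast{\mathbf b}_j$ lies in $V(D)$. Consequently the subspace $W=\langle{\mathbf c}_i\ast{\mathbf b}_j:1\le i\le r,\,1\le j\le n\rangle$ is contained in $V(D)$, giving $\#\mathrm{supp}(D)\ge\dim W$.

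Next I would apply Lemma~\ref{properties}(b) to the generating family of $W$: the dimension is bounded below by the number of distinct values $\nu({\mathbf c}_i\ast{\mathbf b}_j)$. Restricting the indices $j$ to $\Lambda_{\nu({\mathbf c}_i)}$ only shrinks the set of $\nu$-values, so
\[
\dim W\;\ge\;\#\bigcup_{i=1}^{r}\{\nu({\mathbf c}_i\ast{\mathbf b}_j):j\in\Lambda_{\nu({\mathbf c}_i)}\}.
\]

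Finally I would invoke the key well-behaving identity recalled just before Proposition~\ref{agbound}: writing ${\mathbf c}_i=\sum_{k\le\nu({\mathbf c}_i)}\lambda_k{\mathbf b}_k$ with $\lambda_{\nu({\mathbf c}_i)}\neq 0$ and using Lemma~\ref{properties}(a), for every $j\in\Lambda_{\nu({\mathbf c}_i)}$ one has $\nu({\mathbf c}_i\ast{\mathbf b}_j)=\nu({\mathbf b}_{\nu({\mathbf c}_i)}\ast{\mathbf b}_j)$. Substituting these equalities rewrites the union exactly as the set appearing in the statement, completing the proof.

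The argument is essentially routine once the right framework is set up; the only mildly subtle point is remembering that restricting $j$ to the well-behaving set $\Lambda_{\nu({\mathbf c}_i)}$ is exactly what allows the replacement of $\nu({\mathbf c}_i\ast{\mathbf b}_j)$ by $\nu({\mathbf b}_{\nu({\mathbf c}_i)}\ast{\mathbf b}_j)$, so that the bound depends on the basis ${\mathcal B}$ and the numbers $\nu({\mathbf c}_i)$ alone rather than on the specific ${\mathbf c}_i$. No independence among the chosen ${\mathbf c}_i$ is needed beyond their being a basis, and the result holds for any such basis.
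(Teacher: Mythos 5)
Your proof is correct and follows essentially the same route as the paper: introduce $V(D)$ with $\dim V(D)=\#\mathrm{supp}(D)$, observe that all products ${\mathbf c}_i\ast{\mathbf b}_j$ lie in it, and then run the dimension-counting and well-behaving argument of Proposition~\ref{agbound} over all $i$ simultaneously. The paper phrases the first step as $V(D)=V({\mathbf c}_1)+\dots+V({\mathbf c}_r)$ and leaves the rest to ``the results in Section 2,'' but the substance is identical; you have merely written out the details.
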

  \begin{proof}
Given $D$, let us consider the space $V(D)=\{ {\mathbf v}\in {\mathbb F}_q^n : \mbox{supp}({\mathbf v})\subseteq \mbox{supp}(D) \}$. Since $\#\mbox{supp}(D)=\dim (V(D))$ and $\mbox{supp}(D)=\mbox{supp}({\mathbf c}_1)\cup\dots\cup\mbox{supp}({\mathbf c}_r)$, we have that $V(D)=V({\mathbf c}_1)+\dots + V({\mathbf c}_r)$ and the statement follows from the results in Section 2.
  \end{proof}

  \begin{theorem}
For $r=1,\dots,i$, the $r$-th generalized Hamming weight of $C_i$ satisfies
$$
d_r(C_i)\ge \min_{1\le j_1<\dots< j_r\le i} \# \left \{\bigcup_{j\in\{ j_1,\dots,j_r \}}
\{\nu({\mathbf b}_j*{\mathbf b}_t) : t\in\Lambda_j  \}\right \}.
$$
  \end{theorem}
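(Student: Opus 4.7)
The plan is to combine Lemma~\ref{dr} with the basis-selection result in Lemma~\ref{properties}(b) and then take a minimum. More precisely, fix any $r$-dimensional linear subspace $D\subseteq C_i$. Since $D$ is contained in $C_i$, every element ${\mathbf v}\in D$ satisfies $\nu({\mathbf v})\le i$. I would first apply Lemma~\ref{properties}(b) to produce a basis $\{{\mathbf c}_1,\dots,{\mathbf c}_r\}$ of $D$ such that the values $\nu({\mathbf c}_1),\dots,\nu({\mathbf c}_r)$ are pairwise distinct, and write $\{\nu({\mathbf c}_1),\dots,\nu({\mathbf c}_r)\}=\{j_1,\dots,j_r\}$ with $1\le j_1<\cdots<j_r\le i$.

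Next, I would plug this basis into Lemma~\ref{dr}, obtaining
$$
\#\mbox{supp}(D)\ge \#\bigcup_{s=1}^{r}\{\nu({\mathbf b}_{\nu({\mathbf c}_s)}*{\mathbf b}_t):t\in\Lambda_{\nu({\mathbf c}_s)}\}=\#\bigcup_{j\in\{j_1,\dots,j_r\}}\{\nu({\mathbf b}_j*{\mathbf b}_t):t\in\Lambda_j\}.
$$
Since the right-hand side only depends on the set $\{j_1,\dots,j_r\}\subseteq\{1,\dots,i\}$, it is bounded below by the minimum in the statement:
$$
\#\mbox{supp}(D)\ge \min_{1\le j_1<\cdots<j_r\le i}\#\left\{\bigcup_{j\in\{j_1,\dots,j_r\}}\{\nu({\mathbf b}_j*{\mathbf b}_t):t\in\Lambda_j\}\right\}.
$$
Finally, taking the minimum over all $r$-dimensional subspaces $D\subseteq C_i$ on the left gives $d_r(C_i)$ and yields the desired inequality.

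The only subtle point is ensuring the basis supplied by Lemma~\ref{properties}(b) actually satisfies $\nu({\mathbf c}_s)\le i$ for every $s$; but this is immediate because the basis lies in $D\subseteq C_i$, so the ``hard'' part reduces to a bookkeeping observation. No further technical machinery beyond what has already been established in Section~\ref{se:bo} is needed.
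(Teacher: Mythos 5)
Your proposal is correct and follows essentially the same route as the paper: select a basis of an arbitrary $r$-dimensional subspace $D\subseteq C_i$ with distinct $\nu$-values via Lemma~\ref{properties}(b), apply Lemma~\ref{dr}, and minimize over the index sets $\{j_1,\dots,j_r\}\subseteq\{1,\dots,i\}$. The paper's proof is the same argument stated more tersely (it cites ``Lemma~\ref{properties}(c)'', which is a typo for part (b)), so no further comment is needed.
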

  \begin{proof}
 According to Lemma \ref{properties} (c), every linear subspace $D$ of $C_i$ has a basis $\{ {\mathbf c}_1,\dots,{\mathbf c}_r \}$ such that $1\le \nu({\mathbf c}_1)<\dots<\nu({\mathbf c}_r)\le i$. Conversely, given vectors $\{ {\mathbf c}_1,\dots,{\mathbf c}_r \}$ satisfying the above condition, $\langle {\mathbf c}_1,\dots,{\mathbf c}_r \rangle$ is  a vector subspace of $C_i$ of dimension $r$. Then the result is a consequence of Lemma \ref{dr}. 
  \end{proof}

This result is easily translated to one-point AG codes. With the notation as in Section 3, we have codes $C(D,mQ)$ and $C_i=C(D,m_iQ)$. We showed that  
$\#\{\nu({\mathbf b}_j*{\mathbf b}_t) : t\in\Lambda_j  \})\ge \# \Lambda^*_j$. Thus we have

  \begin{theorem}
Let $m$ be a non-negative integer. For $r=1,\dots,i=\dim (C(D,mQ))$, the $r$-th generalized Hamming weight of $C(D,mQ)$ satisfies
$$
d_r(C(D,mQ))\ge d^*_r(i):=\min_{1\le j_1<\dots<j_r\le i} \# (\Lambda^*_{j_1}\cup\dots\cup\Lambda^*_{j_r}). 
$$ 
  \end{theorem}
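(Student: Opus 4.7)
The plan is to specialize the preceding general bound to the chain $C_i=C(D,m_iQ)$ equipped with the basis ${\mathbf b}_i=ev(f_i)$ where $v(f_i)=m_i$, and then translate the right-hand side from $\nu$-values back to sets of non-gaps in $H^\ast$. Concretely, I want to produce, for each tuple $1\le j_1<\dots<j_r\le i$, an inclusion of the form
\[
\bigcup_{j\in\{j_1,\dots,j_r\}}\{\nu({\mathbf b}_j*{\mathbf b}_t):t\in\Lambda_j\}\ \supseteq\ \phi\Bigl(\bigcup_{j\in\{j_1,\dots,j_r\}}\Lambda^\ast_j\Bigr),
\]
where $\phi$ is the bijection recorded below; taking cardinalities and then the minimum over tuples reduces the previous theorem to the desired bound.

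The key tool is the bijection $\phi\colon H^\ast\to\{1,\dots,n\}$ defined by $\phi(m_j)=j=\dim(C(D,m_jQ))$, which exists precisely because $H^\ast=\{m_1,\dots,m_n\}$ parametrizes the jumps in the filtration. First I would invoke Lemma \ref{le:3.1} to note that $\nu(ev(f))=\dim(C(D,v(f)Q))=\phi(v(f))$ whenever $v(f)\in H^\ast$. Then, for any pair $(j,t)$ with $m_j+m_t\in H^\ast$, Proposition \ref{wbp} guarantees that $ev(f_t)\in\Lambda_j$ and
\[
\nu({\mathbf b}_j*{\mathbf b}_t)=\nu(ev(f_jf_t))=\dim\bigl(C(D,(m_j+m_t)Q)\bigr)=\phi(m_j+m_t).
\]
Since $\Lambda^\ast_j=\{m_j+m_t:m_j+m_t\in H^\ast\}$, this yields $\{\nu({\mathbf b}_j*{\mathbf b}_t):t\in\Lambda_j\}\supseteq\phi(\Lambda^\ast_j)$.

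Next I would union over $j\in\{j_1,\dots,j_r\}$ and apply the bijectivity of $\phi$ to conclude
\[
\#\Bigl(\bigcup_{j\in\{j_1,\dots,j_r\}}\{\nu({\mathbf b}_j*{\mathbf b}_t):t\in\Lambda_j\}\Bigr)\ \ge\ \#\Bigl(\bigcup_{j\in\{j_1,\dots,j_r\}}\Lambda^\ast_j\Bigr).
\]
Minimizing over $1\le j_1<\dots<j_r\le i$ on both sides and combining with the generalized-weight bound of the previous theorem gives exactly $d_r(C(D,mQ))\ge d^\ast_r(i)$.

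The proof is essentially a dictionary translation, so the only point requiring some care is the handling of indexing: the elements of $\Lambda^\ast_j\subseteq H^\ast$ are non-gaps, whereas the $\nu$-values appearing in the general bound live in $\{1,\dots,n\}$, and one must keep track of $\phi$ so that the two unions are compared as subsets of the same set. There is no deeper obstacle, since both the well-behaving property and the identity $\nu(ev(f))=\phi(v(f))$ on $H^\ast$ were already established in Section \ref{se:boop}.
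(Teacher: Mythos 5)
Your proof is correct and follows essentially the same route as the paper: specialize the general bound to the chain $C_i=C(D,m_iQ)$ with basis $\{ev(f_1),\dots,ev(f_n)\}$ and use the identity $\nu(ev(f_jf_t))=\dim(C(D,(m_j+m_t)Q))$ from Section 3 to translate the $\nu$-value sets into the sets $\Lambda^*_j$. If anything you are more careful than the paper, which only records the single-set inequality $\#\{\nu({\mathbf b}_j*{\mathbf b}_t):t\in\Lambda_j\}\ge\#\Lambda^*_j$; your global injection $\phi(m)=\dim(C(D,mQ))$ is exactly the point needed to make that inequality survive passage to unions.
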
 
  
This result is similar to the corresponding one for the order bound in \cite{HP}. Also similar results to the ones contained in this section can be obtained for improved codes as well.

{\bf Acknowledgments}. The authors wish to thank Peter Beelen and Tom H{\o}holdt for hospitality and interesting discussions on the subject. This paper was written in part during a visit of the second author to Aalborg University and The Technical University of Denmark. He wishes to thank both institutions for hospitality and support. We also wish to thank Iwan M. Duursma, Radoslav Kirov and Seungkook Park for supporting us with the idea behind the material in Section \ref{sec:BDKP}.

\end{document}